\documentclass{article}

\usepackage[top=2.5cm, bottom=2.5cm, left=2.5cm, right=2.5cm]{geometry}

\usepackage{amsmath}

\usepackage[amsthm,amsmath,thmmarks]{ntheorem}

\usepackage{mathtools}

\usepackage{amssymb}
\usepackage{graphicx}
\usepackage{listings}
\usepackage[table]{xcolor}
\usepackage{diagbox}

\usepackage{tikz}
\usepackage{pgfplots}
\pgfplotsset{compat=1.9}

\usepackage{subfigure}

\usepackage{stmaryrd}
\usepackage{algorithmicx}
\usepackage[noend]{algpseudocode}
\usepackage{algorithm}

\newcommand{\cO}{\mathcal{O}}
\newcommand{\nnn}{\{1, \dots, n\}}
\newcommand\y{\cellcolor{blue!20}}
\newcommand{\zzz}{\cellcolor{red!50}}
\newcommand{\floor}[1]{\left\lfloor{{#1}}\right\rfloor}
\newcommand{\ceil}[1]{\left\lceil{{#1}}\right\rceil}
\newcommand{\ourparagraph}[1]{\smallskip\noindent\textbf{{#1}.}}

\newcommand{\lr}[2]{\llbracket {#1}; {#2} \rrbracket}
\newcommand{\comment}[1]{\textcolor{green!60!black}{/${}^*$ {#1} ${}^*$\hspace{-2pt}/}}

\DeclareMathOperator{\HH}{H}
\DeclareMathOperator{\V}{V}
\DeclareMathOperator*{\argmax}{argmax}

\newtheorem{lemma}{Lemma}
\newtheorem{theorem}{Theorem}
\newtheorem{corollary}{Corollary}
\newtheorem{conjecture}{Conjecture}
\newtheorem{definition}{Definition}
\newtheorem{assumption}{Assumption}

\begin{document}


\begin{center}
{\Large \bf Two and Three-Party Digital Goods Auctions: \\Scalable Privacy Analysis}\\

\bigskip
\bigskip
{\small Patrick Ah-Fat and Michael Huth\\
Department of Computing, Imperial College London\\
London, SW7 2AZ, United Kingdom\\
$\{$patrick.ah-fat14, m.huth$\}$@imperial.ac.uk}
\end{center}

\date{\today}

\bigskip

\begin{abstract}
A digital goods auction is a type of auction where potential buyers bid the maximal price that they are willing to pay for a certain item, which a seller can produce at a negligible cost and in unlimited quantity. 
To maximise her benefits, the aim for the seller is to find the optimal sales price, which every buyer whose bid is not lower will pay. 
For fairness and privacy purposes, buyers may be concerned about protecting the confidentiality of their bids. 
Secure Multi-Party Computation is a domain of Cryptography that would allow the seller to compute the optimal sales price while guaranteeing that the bids remain secret. 
Paradoxically, as a function of the buyers' bids, the sales price inevitably reveals some private information. 
Generic frameworks and entropy-based techniques based on Quantitative Information Flow have been developed in order to quantify and restrict those leakages. 
Due to their combinatorial nature, these techniques do not scale to large input spaces. 
In this work, we aim at scaling those privacy analyses to large input spaces in the particular case of digital goods auctions. 
We derive closed-form formulas for the posterior min-entropy of private inputs in two and three-party auctions, which enables us to effectively quantify the information leaks for arbitrarily large input spaces. 
We also provide supportive experimental evidence that enables us to formulate a conjecture that would allow us to extend our results to any number of parties. 
\end{abstract}

\section{Introduction}

Secure Multi-Party Computation (SMC) \cite{yao1982protocols,shamir1979share} is a paradigm which enables several parties to compute a public function of their own private inputs without ever disclosing their private input. 
Secure protocols that allow participants to compute such functions require them to share specific pieces of information through different rounds of communication intertwined with local computations, with the aim of guaranteeing the concealment of private values. 
Specifically, they ensure that no information flows about the private inputs, apart from that which can be inferred from the intended public output. 
From that notion of security, it follows that the output of any sensible secure computation will reveal \emph{some} information about the private inputs. 
Although cryptographic protocols have been extensively studied and optimised in the past decades in order to improve their speed and efficiency, this leakage is considered \emph{inevitable} and is commonly referred to as the \emph{acceptable leakage} in the literature, and has thus been largely ignored so far \cite{lindell2009secure,orlandi2011multiparty,%
cramer2015secure,aumann2007security}. 

We believe however, that this is a questionable position and that it is of interest~--~and of importance~--~to raise participants' awareness of this leakage before they decide to engage in an SMC protocol, and to offer them the opportunity to gauge, by themselves, the risk that they would run by entering a computation, rather than imposing this leakage on them. 
More precisely, we believe that an SMC participant may be concerned by the following questions:
is this leakage really acceptable? Is not this statement subjective? Are there objective ways of assessing the acceptability of such leakage, that each person could interpret based on her own expectations? 
Finally, is this leakage really inevitable? Under which conditions?

Some recent works have aimed at proposing some possible answers to those questions. 
A framework based on Quantitative Information Flow allows one to quantify this acceptable leakage, where the inputs' privacy is evaluated via general entropy-based measures \cite{ah2017secure,ah2018optimal,ah2018optimalArchive}. 
These measures allow participants to have an objective way of measuring the risks that taking part in a computation would present. These entropy-based measured are generic and can be parametrised so as to capture individual privacy requirements and expectations. 
Based on this model, different randomising techniques have been proposed in order to enhance participants' privacy while guaranteeing high utility \cite{ah2018optimal,ah2020protecting}. 

The principle behind those methods is to select a notion of entropy and measure the inputs' privacy via the conditional entropy of an input given the knowledge of the public output. 
Evaluating the value of those entropy measures requires to browse the whole input space and yields a complexity that is linear in the size of the total input domain. Their combinatorial essence thus does not allow those methods to scale to large inputs spaces. 
It has been shown that this complexity can be reduced for particular theoretical cases such as for three-party affine computations \cite{ah2019scalable,ah2019scalableArchive}, which allows those methods to scale to large input spaces. 
Being able to apply these methods to real-world problem requires the possibility to adapt them to real-world functions, and to large input spaces. 

The aim of this work is to focus on a particular practical application of SMC, namely the \emph{digital goods auctions}, and to scale those privacy analyses to arbitrarily large input spaces. 
More precisely, we aim at reducing the complexity of those analyses by deriving a closed-form formula for the input's posterior min-entropy in the case of two-party auctions, thus providing a way for assessing the acceptable leakage in two party auctions for any input size. 
We then notice that deriving a closed-form formula for this entropy is more involved in the presence of three parties. However, as the generic empirical methods are able to compute this entropy for small input spaces, we are mostly interested in evaluating this entropy for large input spaces. 
For three-party auctions, we thus focus on deriving an asymptotic development of this entropy for large input spaces. 
Finally, we provide supportive experimental evidence that help us to formulate a conjecture on the asymptotic behaviour of this entropy for large input spaces with any number of parties.

This paper is outlined a follows. 
We discuss some related works in Section \ref{sec:related}. We introduce relevant background in Section \ref{sec:back}. The digital goods auctions are presented in Section \ref{sec:digital}. 
Section \ref{sec:two} focuses on two-party auctions while three-party auctions are tackled in Section \ref{auction:sec:three}. 
Our conjecture is supported and formulated in Section \ref{auction:sec:multi}. 
We discuss our work in Section \ref{sec:discu} and conclude in Section \ref{sec:conclu}.



\section{Related Works}
\label{sec:related}

In this section, we present some relevant domains of cryptography and discuss their relation to our work. 

\ourparagraph{Secure Multi-party Computation}
Secure Multi-party Computation  \cite{yao1986generate,yao1982protocols,shamir1979share,%
rabin1989verifiable,ben1988completeness,chaum1988multiparty} is a domain of Cryptography that provides advanced protocols which enable several participants to compute a public function of their own private inputs without having to rely on any other trusted third party or any external authority. Those protocols enable the participants to compute a function in a decentralised manner, while ensuring that no information leaks about the private inputs, other than what can be inferred from the public output. 
The commonly called ``acceptable leakage'' which is further studied in this paper, is the information that can be inferred by an attacker about the private inputs given the knowledge of the public output alone. 

Secure Multi-Party Computation is not the only domain that is subject to an acceptable leakage. In particular, the results of our work are also applicable to other fields or scenarios that aim at protecting the inputs' privacy and that involve the opening of a public output, such as outsourced computation where a trusted third party is privately sent all the inputs and returns the public output as unique piece of information, or trusted computing where the parties input their secret data into hardware security modules, which then ensure that no unintended information will be accessible to the other parties. 

We emphasise the fact that our work focuses on the acceptable leakage that may occur in SMC, trusted computing or outsourced computations, and is thus largely orthogonal to the technicalities that SMC protocols may involve.

\ourparagraph{Differential Privacy}
Differential Privacy (DP) \cite{dwork2008differential,dwork2014algorithmic} formalises privacy concerns and introduces techniques that provide users of a database with the assurance that their personal details will not have a significant impact on the output of the queries performed on the database. More precisely, it proposes mechanisms which ensure that the outcome of the queries performed on two databases differing in at most one element will be statistically indistinguishable. Moreover, minimising the distortion of the outcome of the queries while ensuring privacy is an important trade-off that governs DP. 

Although DP is particularly suited for quantifying~--~and enhancing~--~privacy in statistical computations involving a large number of parties, its usefulness diminishes when a small number of parties are involved in the computation, or when the output of the computation is meant to be highly dependent on every input value. 
In a vote or in an auction for example, it would not be sensible to evaluate the privacy of inputs by how independent they are of the output. 
In general multi-party computations, independence between the output and the inputs is not a desirable property, and we thus need a more meaningful way of quantifying the inputs' privacy, which we discuss in the next paragraph.

\ourparagraph{Quantitative Information Flow}
The purpose of Quantitative Information Flow (QIF) \cite{smith2009foundations,malacaria2015algebraic} is to provide frameworks and techniques based on information theory and probability theory for measuring the amount of information that leaks from a secret. 
Different mathematical concepts have emerged in order to convey varied and precise information about a secret: 
Shannon entropy \cite{BLTJ:BLTJ1338} reflects the minimum number of binary questions required to recover a secret on average, while the min-entropy is an indicator of the probability to guess a secret in one try \cite{smith2011quantifying,cachin1997entropy,smith2009foundations}. 
Richer measures such as R\'enyi entropy \cite{renyi1961measures} and the $g$-entropy \cite{m2012measuring} have been introduced in order to quantify some specific properties of a secret. Generalised entropies have been proposed in order to unify those different concepts \cite{ah2018optimal,khouzani2016relative}. 

%
In this work, we will measure the information gained by an attacker by means of min-entropy, which is used extensively in Cryptography in order to quantify the vulnerability of a secret. 
Although we selected the conditional min-entropy in order to propose a measure of privacy that can be meaningful in SMC, we believe that it would be interesting to extend and compare our approach to other notions of entropy and possibly other methods for quantifying privacy. 

\section{Background}
\label{sec:back}

Let us now present the mathematical model \cite{ah2017secure,ah2018optimal} that we will use to study the notion of privacy in SMC. 

Let $n$ be a positive integer. Let $x_1, \dots, x_n$ be $n$ integers, belonging to $n$ different parties $P_1, \dots, P_n$ respectively. 
Let us assume that these parties wish to enter the secure computation of an $n$-ary function $f$ and to compute its output $o = f(x_1, \dots, x_n)$. 

We are interested in studying the information that opening the output reveals about private inputs. 
More precisely, let us assume that we wish to study the information that leaks about private input $x_j$. We call it \emph{targeted} input, while the other inputs are called \emph{spectators}' inputs. 
To this end, we consider each input $x_i$ as a random variable $X_i$ taking values in a domain $D_i$. 
The output is also assigned a random variable $O$ defined as a composition of random variables $O = f(X_1, \dots, X_n)$. Its domain is denoted by $D_O$. 

Then the privacy of targeted input $X_j$ will be quantified as the conditional min-entropy of $X_j$ given $O$, defined as:
\[ \HH(X_j \mid O) = - \log \V(X_j \mid O) \]
where the conditional vulnerability $\V(X_j \mid O)$ is defined as:
\begin{equation}
\label{eq:vuln_gen}
\V(X_j \mid O) = \sum_{o \in D_O} p(O=o) \cdot \max_{x_j \in D_j} p(X_j=x_j\mid O=o)
\end{equation}

For clarity purposes, we will abuse notation and omit the domains in the summations, and omit the random variable name in the probability notations, when they can be obviously inferred from context. This way, the above vulnerability rewrites as 
\( \V(X_j \mid O) = \sum_{o} p(o) \cdot \max_{x_j} p(x_j\mid o) \)
. 

We now formulate an assumption that will hold throughout the paper. 

\begin{assumption}
\label{ass:uni}
Throughout the paper, we assume that the inputs are uniformly distributed over $\lr{1}{m}$ where $m$ is a positive integer, and where $\lr{1}{m}$ denotes $\{1, \dots, m\}$. 
\end{assumption}


By virtue of Bayes' theorem and Assumption \ref{ass:uni}, the vulnerability from Equation (\ref{eq:vuln_gen}) can be rewritten as:
\begin{align}
\V 
&= \sum_o \max_x p(x) \cdot p(o \mid x) \nonumber\\
&= \frac{1}{m} \sum_o \max_x p(o \mid x) \label{auction:eq:vuln}
\end{align}

We recall that all the values of $p(o \mid x)$ can be computed in $\cO(m^n)$, the $\max$ can be computed in $\cO(m)$ and the sum in $\cO(m^n)$, which quickly becomes intractable as the input size $m$ grows. 
We thus seek a closed-form formula for $\HH(X \mid O)$. 
In the next section, we present the function $f$ that is considered in this paper. 
Sections \ref{sec:two} and \ref{auction:sec:three} then focus on simplifying the expression of $\V$ for this precise function $f$. 
More precisely, we derive a closed-form formula for $\V$ for two-party auctions. 
Following the same approach, we notice that deriving an \emph{exact} formula would be more involved in the three-party case. However, our main objective is to be able to provide analyses that scale to \emph{large} input spaces, since empirical, combinatorial, analyses are already able to compute exact values of $\V$ for \emph{small} input spaces~--~and fail to do so for large ones. 
We then decide to focus on deriving the \emph{asymptotic} behaviour of $\V$ for large values of $m$ in three-party auctions.

\section{Digital Goods Auctions}
\label{sec:digital}

Auctions are part of the practical use cases that can benefit from the security properties provided by Secure Multi-Party Computation. In fact, and as an aside, one of the first practical applications of SMC implemented on a large scale was an auction between several Danish farmers and a producer \cite{bogetoft2009secure}. 
Indeed, depending on the setting and the rules of the auction, participants may be interested in keeping their bids private in order to protect their economic interests. Resorting to SMC might also enhance fairness between participants and may provide other reassuring guarantees that may be lacking in a traditional auction. A non-exhaustive list of such guarantees are:
\begin{itemize}
\item The confidentiality of the bids protects the participants' economic position from both the auctioneer and the other participants. 
\item Protocols that are secure under active adversaries may guarantee the participants that the result has not been falsified. In comparison with traditional methods, this prevents the polling authority from being involved in any kind of corruption. 
\item In some cases, SMC may offer the benefit that bids from different participants are taken into account simultaneously, whereas some traditional auction types may not. 
\end{itemize}

Although SMC provides the participants with a way of entering all their inputs once and simultaneously, and importantly, without revealing their bids, we know that some information will leak about private bids. 

In this work, we study a particular case of auctions, known as \emph{digital goods auctions}. 
This application has also been chosen as a case study in influential papers on Differential Privacy such as McSherry and Talwar's paper on the exponential mechanism \cite{mcsherry2007mechanism}. 
Let us introduce the principle of a digital goods auction, and explain the different pieces of private information that are being manipulated and the public information that is revealed during such auctions. 

A digital goods auction involves one seller and $n$ buyers. 
A seller has an unlimited supply of a certain item or good, that she wants to sell. 
Each buyer will either buy the item, or refuse to buy it. In particular, a buyer will not buy the item several times. 
Each buyer $P_i$ will bid a price $x_i$, which is the maximal price that he is willing to pay to buy the item. 
If the sales price $p$ of the item is greater than $x_i$, then buyer $P_i$ will not buy the item. If $p$ is not greater than $x_i$, then $P_i$ will pay the price $p$ to get the item, which will turn into benefits for the seller. 
We assume that the seller did not pay anything to acquire the items, so that her total \emph{benefits}~--~also referred to as \emph{budget} or \emph{profit}~--~will equal $pb$ where $p$ is the sales price of the item and $b$ is the number of buyers who can afford it, i.e. $b = |\{ i \in \lr{1}{n} \mid x_i \geq p \}|$. 
The aim of the auction is to determine the optimal sales price of the item that maximises the seller's benefits. As an aside, if the seller's profits can be maximised with different values of $p$, then we define the auction as retaining the lowest value of $p$, as it will satisfy more participants. 
%
%
%
The computation of the optimal price of the item can be represented as the following function $f$ described in Algorithm \ref{auction:algo:f}:

{\small
\begin{algorithm}
\hspace*{\algorithmicindent} \textbf{Inputs:} $x_1, \dots, x_n  \in \lr{1}{m}$ \\
\hspace*{\algorithmicindent} \textbf{Output:} Auction sales price $p \in \{x_1, \dots, x_n\}$
\vspace{1.5mm}
\begin{algorithmic}[1]
\Function{$f$}{$x_1, \dots, x_n$}
\State sort $x_i$'s in descending order such that $x_1 \geq \dots \geq x_n$
\State \( k \gets \argmax_j j \cdot x_j \)   (choose largest possible $k$)
\State \Return $x_k$
\EndFunction
\end{algorithmic}
\caption{Multi-party auction function $f$}
\label{auction:algo:f}
\end{algorithm}
}

%

Choosing the largest possible $k$ means that if the same budget is attainable with different sales prices, we choose the one which enables more participants to buy, e.g. $f(1, 1, 4, 1) = 1$ where the maximal budget can equally be achieved with a sales price of $1$ or $4$, but the former allows $4$ participants to buy whereas the latter allows only one. 
It is worth noting that the output $o = f(x_1, \dots, x_n)$ necessarily equals one of the input values $x_i$. We can show that the price is not optimal otherwise. 

Naturally, the buyers' bids $x_i$ constitute private pieces of information that the buyers do not wish to reveal: neither the other buyers nor the seller should be able to learn any information about a particular bid $x_i$ before the opening of the final price $p$. 
In order to guarantee such a level of privacy, the participants can for example enter an SMC protocol, or resort to a Trusted Execution Environment. 

On the other hand, the sales price $p$ is the information that is intended to be computed and to be made public. 
As such, it inevitably reveals some information about the private bids, which is commonly referred to as the acceptable leakage in the SMC literature. 
One may wish to gauge this acceptable leakage, and in particular may wonder whether this leakage is tolerable in a two-party auction. 

The aim of the work reported in this paper is to quantify the information that flows about the private bids when the result of the auction~--~i.e. the sales price~--~is revealed. 
More precisely, we aim at deriving a method for quantifying the bids' privacy that is \emph{scalable} to arbitrarily large input spaces, 
which previous generic methods were not able to accommodate. 
Precisely, we now aim at simplifying the expression of $p(o \mid x)$ from Equation (\ref{auction:eq:vuln}), where $x$ represents one targeted input in order to be able to compute it for large values of $m$.

\section{Two-Party Auctions}
\label{sec:two}

In a two-party auction, the function $f$ can be simplified. 
It is straightforward to see that the sorting procedure and the $\argmax$ function can be written as in the following Algorithm \ref{auction:algo:f2}. 

{\small
\begin{algorithm}
\hspace*{\algorithmicindent} \textbf{Inputs:} $x, y \in \lr{1}{m}$ \\
\hspace*{\algorithmicindent} \textbf{Output:} Sales price $p \in \{x, y\}$
\vspace{1.5mm}
\begin{algorithmic}[1]
\Function{$f$}{$x, y$}
\If {$x > y$}
{        \If {$x > 2y$} 
        	 \Return $x$
        \Else 
             \phantom{\textbf{if} $x > 2y$ \textbf{t}}
             \Return $y$
		\EndIf
}
\Else ~\comment{$x \leq y$}
    	\If {$y > 2x$}  
    	      \Return $y$
    	\Else  
             \phantom{\textbf{if} $x > 2y$ \textbf{t}}
    		 \Return $x$
    	\EndIf
\EndIf
\EndFunction
%
\end{algorithmic}
\caption{Two-party auction function $f$}
\label{auction:algo:f2}
\end{algorithm}
} 


In order to illustrate and reason about the results of such a function, let us assume that $m=9$ and let us plot the function's outputs on the 2-dimensional array in Table \ref{auction:table:foutputs}.

\begin{table}
\caption{Outputs of the two-party auction with maximal input $m=9$. }
\label{auction:table:foutputs}
\[
  \begin{array}{c|ccccccccc}
    $\diagbox[width=2.5em,height=2.5em]{$x$}{$y$}$  & 1  & 2 & 3&4&5&6&7&8&9 \\ \hline
    1   & 1 & 1 & 3 & 4 & 5 & 6 & 7 & 8 & 9 \\
    2   & 1 & 2 & 2 & 2 & 5 & 6 & 7 & 8 & 9 \\
    3   & 3 & 2 & 3 & 3 & 3 & 3 & 7 & 8 & 9 \\
    4   & 4 & 2 & 3 & 4 & 4 & 4 & 4 & 4 & 9 \\
    5   & 5 & 5 & 3 & 4 & 5 & 5 & 5 & 5 & 5 \\
    6   & 6 & 6 & 3 & 4 & 5 & 6 & 6 & 6 & 6 \\
    7   & 7 & 7 & 7 & 4 & 5 & 6 & 7 & 7 & 7 \\
    8   & 8 & 8 & 8 & 4 & 5 & 6 & 7 & 8 & 8 \\
    9   & 9 & 9 & 9 & 9 & 5 & 6 & 7 & 8 & 9 \\
  \end{array}
\]
\end{table}

Let us now look at the quantity that we wish to compute. 
In order to compute $\HH(X\mid O)$, we will compute $\max_x p(o\mid x)$ for each output $o$. 
To do so, we argue that for each output $o$, we have $\max_x p(o\mid x) = p(O=o \mid X=o)$. 
Indeed, for fixed values of $o$ and $x$, we have:
\begin{align*}
 p(o \mid x) &= \sum_{\substack{y \\ f(x, y) = o}} p(y) \\
 &= \sum_{\substack{y \\ f(x, y) = o}} \frac{1}{m} 
\end{align*}
since the inputs are uniformly distributed. 

Moreover, if $x \neq o$, then this sum can contain \emph{at most} one summand which would correspond to the case where $y=o$, since the output $o$ must equal one of the inputs. 
On the contrary, if $x=o$ then the sum contains \emph{at least} one summand which corresponds to the case where $y=o$, again because the output must equal one of the inputs. 

For that reason, we have:
\[ \max_x  p(o \mid x) = p(o\mid X=o) \]
and thus:
\begin{align}
\V &= \frac{1}{m} \sum_o p(o\mid X = o) \nonumber \\
&= \frac{1}{m^2} \sum_o |\{ y \mid f(o, y) = o \}| \label{eq:vuln_two}
\end{align}
where $|\cdot|$ denotes the cardinality of a set.

We can now illustrate in Table \ref{auction:table:colour} the result of that sum by highlighting in colour all the cells that satisfy the condition $f(o, y) = o$. 
We gather all the inputs satisfying this condition in a set that we define as $S = \{ (x, y) \in \lr{1}{m}^2 \mid f(x, y) = x \}$. 
Red cells correspond to the cases where $x>y$ while blue cells to those where $x \leq y$:

\begin{table}
\caption{Enumerating set $S$ that contains all the input combinations $(x,y)$ that satisfy $f(x, y) = x$. }
\centering
\vspace{.4\baselineskip}
\subfigure[Couloured cells highlight elements of $S$. Red cells correspond to cases where $x > y$. Blue cells include cases where $x \leq y$. ]{\label{auction:table:colour}
$
  \begin{array}{c|ccccccccc}
    \multicolumn{10}{c}{} \vspace{-2mm} \\
    $\diagbox[width=2.5em,height=2.5em]{$x$}{$y$}$  & 1  & 2 & 3&4&5&6&7&8&9 \\ \hline
    1   & \y 1 & \y 1 & 3 & 4 & 5 & 6 & 7 & 8 & 9 \\
    2   & 1 & \y 2 & \y 2 & \y 2 & 5 & 6 & 7 & 8 & 9 \\
    3   & \zzz 3 & 2 & \y 3 & \y 3 & \y 3 & \y 3 & 7 & 8 & 9 \\
    4   & \zzz 4 & 2 & 3 & \y 4 & \y 4 & \y 4 & \y 4 & \y 4 & 9 \\
    5   & \zzz 5 & \zzz 5 & 3 & 4 & \y 5 & \y 5 & \y 5 & \y 5 & \y 5 \\
    6   & \zzz 6 & \zzz 6 & 3 & 4 & 5 & \y 6 & \y 6 & \y 6 & \y 6 \\
    7   & \zzz 7 & \zzz 7 & \zzz 7 & 4 & 5 & 6 & \y 7 & \y 7 & \y 7 \\
    8   & \zzz 8 & \zzz 8 & \zzz 8 & 4 & 5 & 6 & 7 & \y 8 & \y 8 \\
    9   & \zzz 9 & \zzz 9 & \zzz 9 & \zzz 9 & 5 & 6 & 7 & 8 & \y 9 \\
  \end{array}
$
}
\hspace{1.5cm}
\subfigure[Blue cells have been transposed from Table \ref{auction:table:colour} in order to surface an obvious expression for the cardinality $|S|$. ]{\label{auction:table:tranpose}
$
\begin{array}{c|ccccccccc}
    \multicolumn{10}{c}{} \vspace{-2mm} \\
    $\diagbox[width=2.5em,height=2.5em]{$x$}{$y$}$  & 1  & 2 & 3&4&5&6&7&8&9 \\ \hline
    1   & \y 1 & 1 & 3 & 4 & 5 & 6 & 7 & 8 & 9 \\
    2   & \y 1 & \y 2 & 2 & 2 & 5 & 6 & 7 & 8 & 9 \\
    3   & \zzz 3 & \y 2 & \y 3 & 3 & 3 & 3 & 7 & 8 & 9 \\
    4   & \zzz 4 & \y 2 & \y 3 & \y 4 & 4 & 4 & 4 & 4 & 9 \\
    5   & \zzz 5 & \zzz 5 & \y 3 & \y 4 & \y 5 & 5 & 5 & 5 & 5 \\
    6   & \zzz 6 & \zzz 6 & \y 3 & \y 4 & \y 5 & \y 6 & 6 & 6 & 6 \\
    7   & \zzz 7 & \zzz 7 & \zzz 7 & \y 4 & \y 5 & \y 6 & \y 7 & 7 & 7 \\
    8   & \zzz 8 & \zzz 8 & \zzz 8 & \y 4 & \y 5 & \y 6 & \y 7 & \y 8 & 8 \\
    9   & \zzz 9 & \zzz 9 & \zzz 9 & \zzz 9 & \y 5 & \y 6 & \y 7 & \y 8 & \y 9 \\
  \end{array}
$}
\end{table}

The aim is thus to compute the number of coloured cells, which equals the desired values of $\sum_o |\{ y \mid f(o, y) = o \}|$. 
To do this, we present the following geometric interpretation. Let us ``transpose'' all the cells highlighted in blue. By transposing a cell at location $(i, j)$, we mean discolouring this cell and then colouring its symmetric one at location $(j, i)$. 
We obtain the array depicted in Table \ref{auction:table:tranpose}.

We note that no two cells are coloured with two different colours. 
From that we can see that:
\begin{align*}
\sum_o |\{ y \mid f(o, y) = o \}| &= \sum_{k=1}^m k
\end{align*}
which leads us to the following result. 

\begin{theorem}
\label{auction:thm:h2}
In a two-party auction, where the inputs are uniformly distributed over $\lr{1}{m}$, we have:
\begin{align*}
\HH(X \mid O) 
&= - \log \frac{m+1}{2m}
\end{align*}
\end{theorem}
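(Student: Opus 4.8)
The plan is to pick up from Equation~(\ref{eq:vuln_two}), which already reduces the problem to counting the cardinality of the set $S = \{ (x,y) \in \lr{1}{m}^2 \mid f(x,y) = x \}$, since $\sum_o |\{ y \mid f(o,y) = o \}| = |S|$. Reading off the branch conditions of Algorithm~\ref{auction:algo:f2}, I would first split $S$ into its ``red'' part $R = \{(x,y) \mid x > 2y\}$ (coming from the branch $x > y$, which $x > 2y$ forces automatically since $y \geq 1$) and its ``blue'' part $B = \{(x,y) \mid x \leq y \leq 2x\}$ (the branch $x \leq y$ with $y$ not exceeding $2x$). These are disjoint, so $|S| = |R| + |B|$.

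Next I would make the transposition turning Table~\ref{auction:table:colour} into Table~\ref{auction:table:tranpose} rigorous. Let $B' = \{ (y,x) \mid (x,y) \in B \} = \{ (a,b) \mid b \leq a \leq 2b \}$ be the mirror image of $B$ across the diagonal; transposition being a bijection, $|B'| = |B|$, hence $|S| = |R| + |B'|$. The two claims to check are: (i) $R$ and $B'$ are disjoint, because $a > 2b$ is incompatible with $a \leq 2b$; and (ii) $R \cup B' = \{ (a,b) \in \lr{1}{m}^2 \mid a \geq b \}$, the closed lower-triangular region. For (ii), the inclusion $\subseteq$ is immediate ($a > 2b \geq b$ in one case, $b \leq a$ in the other), and $\supseteq$ follows since any pair with $a \geq b$ satisfies either $a \leq 2b$, landing in $B'$, or $a > 2b$, landing in $R$. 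Counting the lower triangle row by row then gives $|S| = \sum_{k=1}^{m} k = \frac{m(m+1)}{2}$.

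Finally, substituting into Equation~(\ref{eq:vuln_two}) yields
\[
\V = \frac{1}{m^2} \cdot \frac{m(m+1)}{2} = \frac{m+1}{2m},
\]
and therefore $\HH(X \mid O) = -\log \V = -\log \frac{m+1}{2m}$, as claimed.

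I expect the only genuine obstacle to be the bookkeeping in step~(ii): one must check that the integer inequalities defining $R$, $B$, and $B'$ line up exactly, so that after transposition the coloured cells tile the lower triangle with no overlaps and no gaps, paying particular attention to the boundary cases ($y = 2x$ versus $y > 2x$, and the diagonal $x = y$). Everything upstream — the reduction $\max_x p(o \mid x) = p(o \mid X = o)$ and the passage from $\V$ to $\HH$ — is already established in the excerpt, so no further work is needed there.
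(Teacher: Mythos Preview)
Your proposal is correct and follows essentially the same route as the paper's own proof: both partition $S$ according to the two branches of Algorithm~\ref{auction:algo:f2}, transpose the ``blue'' piece, observe that the transposed piece and the ``red'' piece are disjoint and together fill the lower triangle $\{(a,b)\mid a\geq b\}$, and conclude $|S|=\frac{m(m+1)}{2}$. The only cosmetic difference is that the paper carries the ordering condition $x>y$ explicitly in its set $S_1$ before simplifying, whereas you note up front that $x>2y$ already forces $x>y$; the boundary bookkeeping you flag as the potential obstacle is handled correctly in both versions.
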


\begin{proof}

As we argued in Equation (\ref{eq:vuln_two}), we know that:
\begin{align*}
\V 
&= \frac{1}{m^2} \sum_o |\{ y \mid f(o, y) = o \}|
\end{align*}

It thus suffices to compute the cardinality of the following set $S$:
\[ S = \{ (x, y) \in \lr{1}{m}^2 \mid f(x, y) = x \} \]

Let us express $S$ as the disjoint union of the following two subsets:
\begin{align*}
S_1 &= \{ (x, y) \in \lr{1}{m}^2 \mid f(x, y) = x \wedge x > y \} \\
S_2 &= \{ (x, y) \in \lr{1}{m}^2 \mid f(x, y) = x \wedge x \leq y \}
\end{align*}

We note that $S_1$ corresponds to the area highlighted in red in Table \ref{auction:table:colour}, and $S_2$ corresponds to the blue area. 
The additional ordering on $x$ and $y$ ensures that $\{S_1, S_2\}$ forms a partition of $S$ and thus $|S| = |S_1|+|S_2|$. 

We can swap both coordinates of the elements of $S_2$ without altering its cardinality. We thus have $|S_2| = |S_2'|$ where we define $S_2'$ as follows:
\begin{align*}
S_2' &= \{ (x, y) \in \lr{1}{m}^2 \mid f(y, x) = y \wedge y \leq x \} \\
&= \{ (x, y) \in \lr{1}{m}^2 \mid 2y \geq x \wedge y \leq x \}
\end{align*}

On the other hand, we have:
\[ S_1 = \{ (x, y) \in \lr{1}{m}^2 \mid x > 2y \wedge y \leq x \} \\ \]

We can see that $S_1$ and $S_2'$ are disjoint and thus $|S_1| + |S_2'| = |S_1 \cup S_2'|$. 
This union can be rewritten as:
\begin{align*}
S_1 \cup S_2' &= \{ (x, y) \in \lr{1}{m}^2 \mid (x > 2y \vee 2y \geq x) \wedge y \leq x \} \\
&= \{ (x, y) \in \lr{1}{m}^2 \mid y \leq x \}
\end{align*}
from which we can infer that
\begin{align*}
|S_1 \cup S_2'|
 &= \frac{m(m+1)}{2} 
\end{align*}

and thus:
\begin{align*}
\HH(X \mid O) &= - \log \frac{m(m+1)}{2m^2} \\
&= - \log \frac{m+1}{2m}
\end{align*}
\end{proof}

We also checked that this formula is experimentally validated by our programs computing $\HH(X \mid O)$ empirically. 

As we are particularly interested in studying the behaviour of $\HH(X \mid O)$ for large input spaces, we formulate the following corollary. 

\begin{corollary}
\label{auction:coro:2}
When $m$ tends towards infinity, $\HH(X \mid O)$ converges and:
\[ \lim_{m \to \infty} \HH(X \mid O) = \log 2 \]
\end{corollary}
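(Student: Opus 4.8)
The plan is to reduce the corollary directly to Theorem \ref{auction:thm:h2}, which already supplies the closed-form expression $\HH(X \mid O) = -\log\frac{m+1}{2m}$ valid for every positive integer $m$. Since this formula holds identically in $m$, computing $\lim_{m\to\infty}\HH(X\mid O)$ amounts to computing the limit of the inner argument and invoking continuity of the logarithm on $(0,\infty)$.

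First I would rewrite the fraction in a form that makes the limit transparent, namely $\frac{m+1}{2m} = \frac{1}{2} + \frac{1}{2m}$. As $m\to\infty$ the term $\frac{1}{2m}$ vanishes, so $\frac{m+1}{2m}\to\frac{1}{2}$, and this limit is attained from above with the sequence staying in the compact subinterval $[\tfrac12,\tfrac34]\subset(0,\infty)$ on which $\log$ is continuous. Then, because $x\mapsto -\log x$ is continuous at $x=\tfrac12$, I can pass the limit through: $\lim_{m\to\infty}\HH(X\mid O) = \lim_{m\to\infty}\bigl(-\log\tfrac{m+1}{2m}\bigr) = -\log\tfrac12 = \log 2$.

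Since the argument is a one-line limit evaluation resting entirely on the already-proved Theorem \ref{auction:thm:h2}, there is essentially no obstacle; the only thing to be careful about is to make the continuity/limit-interchange step explicit rather than implicit, and to note (if desired) the monotone, decreasing-to-$\log 2$ behaviour of the sequence, which follows from $\frac{m+1}{2m}$ being decreasing in $m$. One could optionally remark that this confirms the general intuition that in the two-party case the acceptable leakage saturates: in the limit the output pins the targeted bid down to a guessing probability of exactly $1/2$.
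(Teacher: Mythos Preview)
Your proposal is correct and follows essentially the same approach as the paper: both reduce the corollary directly to Theorem~\ref{auction:thm:h2} and read off the limit. The paper's proof is a single sentence (``immediate consequence of Theorem~\ref{auction:thm:h2}''), while you spell out the continuity and limit-interchange details, but there is no substantive difference.
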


\begin{proof}
This is an immediate consequence of Theorem \ref{auction:thm:h2}. 
\end{proof}

\section{Three-Party Auctions}
\label{auction:sec:three}

Let us now consider the case where three parties enter an auction. Algorithm $f$ is adapted in the following Algorithm \ref{auction:algo:f3}. 
We can see that once the inputs have been sorted, the program simply consists in finding the maximum of $x, 2y$ and $3z$.

{\small
\begin{algorithm}
\hspace*{\algorithmicindent} \textbf{Inputs:} $x, y, z \in \lr{1}{m}$ \\
\hspace*{\algorithmicindent} \textbf{Output:} Sales price $p \in \{x, y, z\}$
\vspace{1.5mm}
\begin{algorithmic}[1]
\Function{$f$}{$x, y, z$}
\State Sort inputs such that $x \geq y \geq z$
\If {$x > 2y$ and $x > 3z$} \Return $x$
\Else
    	\If {$2y > 3z$} \Return $y$
    	\Else \phantom{\textbf{if} $2y > 3z$ \textbf{t}} \Return $z$
    	\EndIf
\EndIf
\EndFunction
\end{algorithmic}
\caption{Three-party auction function $f$}
\label{auction:algo:f3}
\end{algorithm}
} 


Before computing $\V$, let us introduce some useful results that we will need in this section. 

\subsection{Preliminaries}

Let us generalise a result that we hinted at in the previous section, and that will be of importance in deriving an algebraic expression for $\HH(X \mid O)$. 
We claim that substituting the value of one of the inputs to the value of the output does not change the output of $f$. 

\begin{lemma}
\label{auction:lemma:replace}
Let $\{x_i\}_i$ be a set of $n$ inputs and let $o$ be a value. Then we have:
\begin{equation}
f(x_1, x_2, \dots, x_n) = o \implies f(o, x_2, \dots, x_n) = o
\end{equation}
\end{lemma}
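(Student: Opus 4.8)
The plan is to reason directly about the auction function as defined in Algorithm \ref{auction:algo:f}, working with the sorted multiset of bids. Fix inputs $x_1, \dots, x_n$ and suppose $f(x_1, \dots, x_n) = o$. By the remark after Algorithm \ref{auction:algo:f}, $o$ is one of the input values, so some $x_i$ equals $o$; in particular when we replace $x_1$ by $o$ we are replacing one input value by a value that already occurs among $x_2, \dots, x_n$ (or equals $x_1$ itself, in which case there is nothing to prove). So it suffices to understand how the sorted sequence and the resulting $\argmax$ change when we lower or raise $x_1$ to the level $o$.

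First I would set up notation: let $x_{(1)} \ge x_{(2)} \ge \dots \ge x_{(n)}$ be the sorted bids, and recall that the auction picks $k$ to be the largest index maximising $j \cdot x_{(j)}$ and returns $x_{(k)} = o$. The maximal budget is $B = k \cdot o$. Now form the new instance by setting the first coordinate to $o$; call its sorted version $x'_{(1)} \ge \dots \ge x'_{(n)}$. The key structural observation is that the new sorted sequence differs from the old one only by moving one entry to the value $o$: every value $\ge o$ that was present is still present (with multiplicity at least as large), and every value $< o$ is unchanged or has had one entry raised to $o$. I would then argue two things about the new budget function $j \mapsto j \cdot x'_{(j)}$: (1) for every price level $v$, the number of new bids that are $\ge v$ is at least the number of old bids $\ge v$ when $v \le o$, and equal when $v > o$ — so the attainable budget at price $o$ is still $\ge B$, while the attainable budget at any price $v > o$ is unchanged; (2) for any price $v < o$, the number of bidders willing to pay $v$ can only increase, but the budget $v \cdot (\#\{i : x_i \ge v\})$ — wait, I must be careful, since raising a low bid to $o$ could in principle boost a low price level. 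Let me restructure.

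The cleaner route is to phrase everything in terms of the price-indexed budget $g(v) = v \cdot |\{i : x_i \ge v\}|$, since the output of $f$ equals $\argmax$ over $v \in \{x_1,\dots,x_n\}$ of $g(v)$, taking the smallest maximiser. Let $g'$ be the analogous function for the modified instance. Because we only changed one coordinate from $x_1$ to $o$, for every $v$ we have $|\{i : x'_i \ge v\}| = |\{i : x_i \ge v\}|$ if $v \le o$ or $v > \max(x_1, o)$, and in the intermediate range $o < v \le x_1$ the count drops by exactly one. Hence $g'(v) = g(v)$ for $v \le o$, and $g'(v) = g(v) - v \le g(v)$ for $o < v \le x_1$, and $g'(v) = g(v)$ for $v > x_1$ (here I am using $o \le x_1$ as the generic case; the case $o \ge x_1$ is symmetric and in fact easier since then counts only go up at levels $\le o$ and are unchanged above, so I would handle it as a short separate paragraph). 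Therefore $\max_v g'(v) \le \max_v g(v) = B$, with equality witnessed at $v = o$ since $g'(o) = g(o) = B$. It follows that the modified maximal budget is exactly $B$ and is attained at price $o$. Finally, to see that $f$ returns exactly $o$ and not some smaller price, I note that any price $v < o$ has $g'(v) = g(v) \le B$, and if $g(v) = B$ for some input value $v < o$ of the original instance then the original $f$ would have returned that smaller $v$, contradicting $f(x_1,\dots,x_n) = o$ (the "smallest maximiser" tie-break); for the newly-created value $o$ in coordinate $1$ there is no value strictly between... and anyway $o$ itself is now among the inputs, so the smallest maximiser of $g'$ is $o$. Hence $f(o, x_2, \dots, x_n) = o$.

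The main obstacle I anticipate is bookkeeping around the tie-breaking rule and the interplay between the "smallest price" convention in the price-indexed formulation versus the "largest $k$" convention in the index-indexed formulation of Algorithm \ref{auction:algo:f} — these are equivalent but one has to check that carefully, especially when $o$ occurs with multiplicity. A secondary subtlety is the direction of the change: if $x_1 > o$ the relevant counts decrease at high price levels (safe), but if $x_1 < o$ they increase at levels in $(x_1, o]$, and I must confirm this increase cannot push some price $v \le o$ above budget $B$ — which holds because $g'(v) = g(v)$ for $v \le o$ regardless, as raising $x_1$ from below $o$ up to $o$ does not change whether $x_1 \ge v$ for any $v \le o$... no: if $x_1 < v \le o$ then originally $x_1 \not\ge v$ but after the change $x'_1 = o \ge v$, so the count does go up. So in the case $x_1 < o$ I instead get $g'(v) \ge g(v)$ for $x_1 < v \le o$ and must rule out $g'(v) > B$ there; this needs the observation that $g'(v) = v \cdot |\{i: x'_i \ge v\}| \le v \cdot |\{i : x_i \ge o\}| \cdot (\text{something})$ — more precisely, for $v \le o$, $|\{i : x'_i \ge v\}| \le |\{i : x_i \ge v\}| + 1 \le |\{i : x_i \ge o, \text{orig}\}| + 1$ is too weak. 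The right argument is: in the case $x_1 \le o$, the modified instance dominates the original coordinatewise, so I can instead compare the modified instance to a fully "saturated" one and use monotonicity of $f$ in each argument — or, more simply, observe that $x_1 \le o$ together with $f = o$ forces $x_1$ to already not be the pivotal bid, and re-run the $\argmax$ analysis. I will allocate the bulk of the written proof to making this case distinction clean.
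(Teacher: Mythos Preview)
Your reformulation via the price-indexed budget $g(v) = v \cdot |\{i : x_i \ge v\}|$ is exactly the right lens, and is essentially a cleaner repackaging of the paper's framework of ``qualified triples'' and benefits $b$, with the same two-case split on the sign of $o - x_1$. The case $o \le x_1$ is handled correctly: $g'(v) = g(v)$ for $v \le o$ and $g'(v) \le g(v)$ for $v > o$, so the new maximum is still $B$, still attained at $o$, and your tie-break argument goes through.

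The genuine gap is in the case $x_1 < o$. You write that you ``must rule out $g'(v) > B$'' for $v \in (x_1, o]$, and then find your bounds ``too weak''. They are not too weak; they are aimed at the wrong target. In this case $g'(o) = g(o) + o = B + o > B$, so the new maximal budget is \emph{not} $B$ but $B + o$. Once you see this, the argument is immediate: for $v \le x_1$ or $v > o$ one has $g'(v) = g(v) \le B < B + o$, while for $x_1 < v \le o$ one has $g'(v) = g(v) + v \le B + o$, with equality forcing both $g(v) = B$ and $v = o$. Hence $o$ is the \emph{unique} maximiser of $g'$, and $f(o, x_2, \dots, x_n) = o$ with no tie-breaking needed. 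This is precisely the content of the paper's Case~2, phrased there as $b' = b + o$.

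Your proposed fallbacks would not rescue the argument as stated: $f$ is not monotone coordinatewise (for instance $f(1,3)=3$ but $f(2,3)=2$), so an appeal to ``monotonicity of $f$ in each argument'' fails outright.
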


\begin{proof}
Let us introduce a few notations in order to develop this proof. 
Let $\{x_i\}_i$ be a set of $n$ inputs and let $o$ and $o'$ two values such that:
\begin{align}
f(x_1, \dots, x_n) &= o \label{auction:eq:comp1} \\
f(o, x_2, \dots, x_n) &= o' \label{auction:eq:comp2}
\end{align}
where we will refer to Equation (\ref{auction:eq:comp1}) to scenario 1 while Equation (\ref{auction:eq:comp2}) will be referred to as scenario 2. 

For sake of convenience, let us introduce another set of inputs $\{x_i'\}_i$ which we define as:
\begin{align}
x_1' &= o \\
\forall i \in \lr{2}{n} \colon x_i' &= x_i
\end{align}
so that we can write $f(x_1', \dots, x_n') = o'$. 

Let $S$ and $S'$ be two subsets of $\lr{1}{n}$ defined as:
\begin{align*}
S &= \{ i \in \lr{1}{n} \mid x_i \geq o \} \\
S' &= \{ i \in \lr{1}{n} \mid x_i' \geq o' \}
\end{align*}

In other words, $S$ and $S'$ represent the sets of parties that will get to buy the item in an auction involving the sets of inputs $\{x_i\}_i$ and $\{x_i'\}_i$ respectively.

Let $b$ and $b'$ be the seller's benefits in both scenarios, defined as:
\begin{align*}
b &= |S|\cdot o \\
b' &= |S'| \cdot o'
\end{align*}

Finally, for input set $\{\xi_i\}_{i}$, subset $\sigma \subseteq \lr{1}{n}$ and value $\omega$, we say that the triple $(\{\xi_i\}_{i}, \sigma, \omega)$ is \emph{qualified} if:
\[ \forall j \in \sigma \colon \xi_j \geq \omega \]

We can notice that if $(\{\xi_i\}_{i}, \sigma, \omega)$ is qualified, then the seller's benefits in an auction involving inputs $\{\xi_i\}_{i}$ will be no lower than $|\sigma| \cdot \omega$.

The intuition of the proof follows. 
First, there is nothing to show when $o=x_1$. 
We will prove that if $o<x_1$, then $b'=b$ as $o$ is the optimal price for scenario 2. 
If $o>x_1$, we will argue that $b' = b+o$ which can only be achieved for $o'=o$. 

\ourparagraph{Case 1}
Let us assume that $o < x_1$. 
We first argue that $b'=b$. 

In this case, we have $x_i' \leq x_i$ for all $i$ in $\lr{1}{n}$. 
For all subset $\sigma \subseteq \nnn$, if $(\{x_i'\}, \sigma, o')$ is qualified, then $(\{x_i\}, \sigma, o')$ is also qualified. Thus we have $b' \leq b$. 
Moreover, we know that $(\{x_i'\}, S, o)$ is qualified since $x_1'\geq o$. 
Hence, $b' \geq b$ and thus $b'=b$. 

Let us now argue that $o'=o$. 
We know that $b'$ can be achieved with a sales price of $o$. As the auction function $f$ favours lower sales prices for the same seller's benefits, we know that $o' \leq o$. 

Let us assume by contradiction that $o' < o$. 
As argued before, we know that $(\{x_i\}, S', o')$ is qualified. 
The benefits with inputs $\{x_i\}$ would equal $|S'|\cdot o'$. But we have just shown that $b' = b$ where by definition $b'$ equals $|S' \cdot o'|$. 
Thus the optimal benefits in scenario 1 would also be achieved with an output value $o'$ satisfying $o' < o$ which is a contradiction. 
Thus $o'=o$. 

\ourparagraph{Case 2}
Let us assume that $o > x_1$. 
Let us first argue that $b'=b+o$. 

We know that $(\{x_i\}, S, o)$ is qualified by definition and thus $(\{x_i'\}, S \cup \{1\}, o)$ is also qualified since $x_1'=o$. 
Thus $b' \geq (|S| + 1) \cdot o$, or in other words $b' \geq b + o$. 

Moreover, we know that for all subset $\sigma \subseteq \nnn$, if $(\{x_i'\}, \sigma, o')$ is qualified, then the triple $(\{x_i\}, \sigma \cap \{2, \dots, n\}, o')$ is also qualified, since $x_i' = x_i$ for all $i$ in $\lr{2}{n}$. 
Let us now assume by contradiction that $b' > b + o$. 
As by definition $(\{x_i'\}, S', o')$ is qualified, we know that $(\{x_i\}, S' \cap \lr{2}{n}, o')$ is qualified, too. 
Thus: 
\begin{align*}
b &\geq (|S'|-1) \cdot o' \\
&\geq |S'| \cdot o' - o' \\
&\geq b' - o'
\end{align*}

By assumption, this implies:
\begin{align*}
b > b + o - o'
\end{align*}
and thus:
\begin{align*}
o' > o
\end{align*}

This implies that $1 \notin S'$ and thus $(\{x_i\}, S', o')$ is qualified, which means that $b \geq b'$, which is a contradiction. 

In conclusion, we have $b' = b+o$. Let us now argue that $o'=o$. 

We have already mentioned that $b'$ can be achieved with a sales price of $o$, and thus $o' \leq o$. 
Let us assume by contradiction that $o' < o$. 
Then we know that $1 \in S'$ (since $x_1'=o$). 
We also know that as $(\{x_i'\}, S', o')$ is qualified, then $(\{x_i\}, S \setminus \{1\}, o')$ is qualified too. 
Consequently:
\begin{align*}
b &\geq (|S'| - 1) \cdot o' \\
&\geq b' - o'
\end{align*}

As assumed by proof by contradiction that $o' < o$, we thus have $b > b' - o$ and thus:
\[ b' < b+o \]
which is a contradiction, which concludes the case and the proof. 
%
%
%
\end{proof}

We also recall a result that will be useful for studying the asymptotic behaviour of $\HH(X \mid O)$. 

\begin{lemma}
\label{auction:lemma:sums}
Let $n$ be a positive integer and let $a$ and $b$ be positive integer no larger than $n$. 
Then:
\begin{align*}
\sum_{k=a}^b k
&= \frac{1}{2}(b^2 - a^2) + \cO(n) \\
\sum_{k=a}^b k^2
&= \frac{1}{3}(b^3 - a^3) + \cO(n^2)
\end{align*}
\end{lemma}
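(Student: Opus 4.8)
The plan is to invoke the standard closed-form (Faulhaber) expressions for $\sum_{k=1}^{N} k$ and $\sum_{k=1}^{N} k^2$, write each partial sum from $a$ to $b$ as the difference of two such sums (splitting at $a-1$), expand, isolate the leading term, and bound the collected remainder using the hypothesis $1 \le a, b \le n$.

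Concretely, for the first identity I would write $\sum_{k=a}^{b} k = \frac{b(b+1)}{2} - \frac{(a-1)a}{2}$ and expand to obtain $\frac{1}{2}(b^2 - a^2) + \frac{a+b}{2}$; since $a, b \le n$ the residual term $\frac{a+b}{2}$ is at most $n$, hence $\cO(n)$. For the second identity I would similarly write $\sum_{k=a}^{b} k^2 = \frac{b(b+1)(2b+1)}{6} - \frac{(a-1)a(2a-1)}{6}$, and using $b(b+1)(2b+1) = 2b^3 + 3b^2 + b$ together with $(a-1)a(2a-1) = 2a^3 - 3a^2 + a$ this becomes
\[
\sum_{k=a}^{b} k^2 = \frac{1}{3}(b^3 - a^3) + \frac{b^2 + a^2}{2} + \frac{b - a}{6},
\]
where the residual is bounded in absolute value by $n^2 + n/6 = \cO(n^2)$.

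There is no genuine obstacle here; the statement is elementary. The only points requiring a little care are the off-by-one in the lower endpoint (one subtracts the partial sum up to $a-1$, not up to $a$) and tracking signs when expanding the cubic $(a-1)a(2a-1)$. Once those are handled, the claimed bounds follow immediately from $a, b \le n$, so the $\cO$ terms are a polynomial in $n$ of exactly the asserted degree.
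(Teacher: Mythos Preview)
Your proposal is correct and follows essentially the same route as the paper: both arguments invoke the closed forms $\sum_{k=1}^N k = \tfrac{N(N+1)}{2}$ and $\sum_{k=1}^N k^2 = \tfrac{N(N+1)(2N+1)}{6}$, write the partial sum as a difference, and bound the lower-order remainder using $a,b\le n$. The only cosmetic difference is that you track the off-by-one at the lower endpoint and the residual terms explicitly, whereas the paper simply absorbs them into the $\cO(n)$ and $\cO(n^2)$ from the start.
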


\begin{proof}
We recall that we have:
\begin{align*}
\sum_{k=1}^n k &= \frac{n(n+1)}{2} \\
\sum_{k=1}^n k^2 &= \frac{n(n+1)(2n+1)}{6}
\end{align*}

Thus for all positive integer $c$ no larger than $n$, we have:
\begin{align*}
\sum_{k=1}^c k &= \frac{1}{2}c^2 + \cO(n) \\
\sum_{k=1}^c k^2 &= \frac{1}{3}c^3 + \cO(n^2)
\end{align*}

And thus as $a$ and $b$ are no larger than $n$:
\begin{align*}
\sum_{k=a}^b k &= \sum_{k=1}^b k - \sum_{k=1}^a k + \cO(n) \\
&= \frac{1}{2}(b^2 - a^2) + \cO(n) \\
\sum_{k=1}^n k^2 &= \sum_{k=1}^b k^2 - \sum_{k=1}^a k^2 + \cO(n^2) \\
&= \frac{1}{3}(b^3 - a^3) + \cO(n^2)
\end{align*}
where we note that the bounds of the indices in the sums are allowed to differ by 1 since the difference is compensated in the $\cO(n)$ and $\cO(n^2)$ terms. 
\end{proof}

\subsection{Deriving an Expression for the Input Vulnerability}

We are now interested in computing the value of $\V$. 
For this, we argue again that $\max_x p(o\mid x) = p(o \mid X = o)$. 
Indeed, we know that for fixed values of $o$ and $x$, we have:
\begin{align*}
p(o\mid x)&= \sum_{\substack{y, z \\ f(x, y, z) = o}} p(y, z) \\
&= \frac{|\{(y, z) \mid f(x, y, z) = o\}|}{m^2}
\end{align*}
where we note that in this 3-party setting, the pair $(y,z)$ plays the role of the spectators' input, which comprises two inputs $y$ and $z$. 
However, Lemma \ref{auction:lemma:replace} ensures that:
\[ \forall x \colon |\{(y, z) \mid f(x, y, z) = o\}| \leq |\{(y, z) \mid f(o, y, z) = o\}| \]
and thus Equation (\ref{auction:eq:vuln}) becomes:
\begin{equation}
\label{auction:eq:vuln3}
\V = \frac{1}{m^3} \sum_o |\{(y, z) \mid f(o, y, z) = o\}| 
\end{equation}

The aim is thus now to compute the above sum, which can be written as the cardinality of the following set $S$:
\[ S = \{ (x, y, z) \mid f(x, y, z) = x \}\]
which can be written as the following disjoint union:
\[ S = \bigcup_x S^x \]
where for all $x$ we define:
\[ S^x = \{ (x, y, z) \mid f(x, y, z) = x \} \]

Let us partition each set $S^x$ into the following four subsets, that correspond to the different possible input orderings:
\begin{align*}
S_1^x &= S \cap \{ (x, y, z) \mid x > y \wedge x > z \} \\
S_2^x &= S \cap \{ (x, y, z) \mid y \geq x > z \} \\
S_3^x &= S \cap \{ (x, y, z) \mid z \geq x > y \} \\
S_4^x &= S \cap \{ (x, y, z) \mid x \leq y \wedge x \leq z \}
\end{align*}

We note that $S_1$ corresponds to the cases where $x_1$ is the largest of $x, y$ and $z$, $S_2$ and $S_3$ includes the cases where $x_1$ is the middle element and $S_4$ depicts the cases where $x_1$ is the smallest item. 

Let $x$ be in $\lr{1}{m}$. 
It is immediate to see that $\{ S_1^x, \dots, S_4^x \}$ forms a partition of $S^x$ and thus $|S^x| = |S_1^x|+\dots +|S_4^x|$, and furthermore $|S| = \sum_x |S^x|$. Let us thus focus on the cardinality of those four subsets. 

\ourparagraph{Case 1}
Let $y$ and $z$ be in $\lr{1}{m}$ such that:
\begin{align*}
x > y \wedge
x > z
\end{align*}
and let us study the membership of $(x, y, z)$ in $S_1^x$. 

We have:
\begin{align*}
f(x, y, z) = x \iff {}&
	        (x > 2y \wedge
	        x > 3z)
\\{}\vee{}&
	        (x > 3y \wedge
	        x > 2z)
\\ \iff {}&
	        (y < \frac{1}{2}x \wedge
	        z < \frac{1}{3}x)
\\{}\vee{}&
	        (y < \frac{1}{3}x \wedge
	        z < \frac{1}{2}x)
\end{align*}

In order to tally the number of different pairs $(y, z)$ that satisfy those conditions, it is helpful to rewrite those systems as the following disjunction:

\begin{align*}
f(x, y, z) = x \iff {}&
	        (y < \frac{1}{3}x \wedge
	        z < \frac{1}{3}x)
\\ {} \vee {}&
	        (y < \frac{1}{3}x \wedge
	        \frac{1}{3}x \leq z < \frac{1}{2}x)
\\ {} \vee {}&
	        (\frac{1}{3}x \leq y < \frac{1}{2}x \wedge
	        z < \frac{1}{3}x)
\end{align*}

The three disjuncts above are disjoint, and the last two disjuncts are symmetrical in $y$ and $z$. We can thus express the cardinality of $S_1^x$ as:
\begin{align}
|S_1^x| &= \ceil{ \frac{1}{3}x - 1 }^2 + 2\left( \ceil{\frac{1}{2}x} - 1 - \ceil{\frac{1}{3}x} + 1 \right)\left( \ceil{\frac{1}{3}x - 1} \right) \nonumber \\
&= \left( \ceil{\frac{1}{3}x - 1} \right) \left( 2\ceil{\frac{1}{2}x} - \ceil{\frac{1}{3}x} - 1 \right) \label{auction:eq:s1}
\end{align}

\ourparagraph{Case 2 and 3}
Let us now study the cardinality of $S_2^x$. 
Let $y$ and $z$ be in $\lr{1}{m}$ such that:
\begin{align*}
y \geq x > z
\end{align*}
and let us study the membership of $(x, y, z)$ in $S_1^x$. 
We have:

\begin{align*}
f(x, y, z) = x &\iff
	        (2x \geq y \wedge
	        2x > 3z)
\\ &\iff
	        (x \leq y \leq 2x \wedge
	        z < \frac{2}{3}x)
\end{align*}

We can thus express the cardinality of $S_2^x$ depending on which side of $n$, value $2x$ is. 
If $2x \leq m$, then:
\begin{align}
|S_2^x| &= (2x - x + 1) \left( \ceil{\frac{2}{3}x} - 1 \right) \nonumber \\
&= (x+1) \left( \ceil{\frac{2}{3}x} - 1 \right) \label{auction:eq:s2a}
\end{align}

If $2x > m$, then:
\begin{align}
|S_2^x| &= (m - x + 1) \left( \ceil{\frac{2}{3}x} - 1 \right) \label{auction:eq:s2b}
\end{align}

By symmetry on $y$ and $z$, we also have $|S_3^x| = |S_2^x|$. 

\ourparagraph{Case 4}
Let us now take $y$ and $z$ in $\lr{1}{m}$ such that:
\begin{align*}
x \leq y \wedge
x \leq z
\end{align*}
and let us study the membership of $(x, y, z)$ in $S_4^x$. 

We have:
\begin{align*}
f(x, y, z) = x \iff {}&
	        (3x \geq 2y \wedge
	        3x \geq z)
\\ {} \vee {}&
	        (3x \geq y \wedge
	        3x \geq 2z)
\\ \iff {} &
	        (x \leq y \leq \frac{3}{2}x \wedge
	        x \leq z \leq 3x)
\\ {} \vee {}&
	        (x \leq y \leq 3x \wedge
	        x \leq z \leq \frac{3}{2}x)
\end{align*}

Rewriting those conditions as disjoint cases, we get:
\begin{align*}
f(x, y, z) = x \iff {}&
	        (x \leq y \leq \frac{3}{2}x \wedge
	        x \leq z \leq \frac{3}{2}x)
\\ {} \vee {}&
	        (x \leq y \leq \frac{3}{2}x \wedge
	        \frac{3}{2}x < z \leq 3x)
\\ {} \vee {}&
	        (\frac{3}{2}x < y \leq 3x \wedge
	        x \leq z \leq \frac{3}{2}x)
\end{align*}

Let us treat those 3 disjoint disjunctions separately. 
Let $c_1^x, c_2^x$ and $c_3^x$ denote the number of different triples $(x, y, z)$ that satisfy the three above systems respectively, i.e. that:
\begin{align*}
c_1^x &= \left| \left\{ (x, y, z) \mid x \leq y \leq \frac{3}{2}x \wedge x \leq z \leq \frac{3}{2}x \right\} \right| \\
c_2^x &= \left| \left\{ (x, y, z) \mid x \leq y \leq \frac{3}{2}x \wedge \frac{3}{2}x < z \leq 3x \right\} \right| \\
c_3^x &= \left| \left\{ (x, y, z) \mid \frac{3}{2}x < y \leq 3x \wedge x \leq z \leq \frac{3}{2}x \right\} \right| \\
\end{align*}

Let us focus on $c_1^x$ first. 
If $\frac{3}{2}x \leq m$, then:
\begin{equation}
\label{auction:eq:c1a}
 c_1^x = \left( \floor{\frac{3}{2}x} - x + 1 \right)^2 
\end{equation}
Otherwise, if $\frac{3}{2}x > m$, then:
\begin{equation}
\label{auction:eq:c1b}
 c_1^x = (m - x + 1)^2 
 \end{equation}

Let us now focus on $c_2^x$. 
\begin{equation}
\label{auction:eq:c2}
c_2^x =
\begin{dcases*}
\left( 3x - \floor{\frac{3}{2}x} \right) \left( \floor{\frac{3}{2}x} - x  + 1 \right) 
	& if $3x\leq m$ \\
\left( m - \floor{\frac{3}{2}x} \right) \left( \floor{\frac{3}{2}x} - x  + 1 \right) 
	& if $\frac{3}{2}x\leq m \leq 3x$ \\
0 & if $\frac{3}{2}x > m$
\end{dcases*}
\end{equation}

And by symmetry on $y$ and $z$, we have $c_2^x = c_3^x$. 
Finally, we have $|S_4^x| = c_1^x + c_2^x + c_3^x$.

Let us recall that we wished to compute the cardinality of set $S$ in order to fulfil our original aim which was to compute input $x$'s vulnerability $\V$. 
We have now derived the cardinality of each subset $S_i^x$ for all $i$ in $\lr{1}{4}$ and for all $x$ in $\lr{1}{m}$. 
Moreover, we have intentionally expressed $S$ as a disjoint union so that: 
\begin{equation}
\label{auction:eq:cardSsum}
|S| = \sum_x |S_1^x|+\dots+|S_4^x| 
\end{equation}

The expression that we can derive for $|S|$ in this way involves sums with ceilings and floorings and would thus not immediately lead to a closed-form formula that can be computed in constant time. 
Indeed, combining Equations (\ref{auction:eq:s1}), (\ref{auction:eq:s2a}), (\ref{auction:eq:s2b}), (\ref{auction:eq:c1a}), (\ref{auction:eq:c1b}) and (\ref{auction:eq:c2}) provides us with a closed-form expression for $|S_1^x|+\dots+|S_4^x|$ for any fixed value of $x$ in $\lr{1}{m}$. 
Equation (\ref{auction:eq:cardSsum}) then allows us to compute the cardinality $|S|$ by summing those $m$ expressions, which allows us to compute $|S|$ in $\cO(m)$ time. 


However, we recall that for small input spaces, we already have a combinatorial way of computing vulnerability $\V$, and that our major problem is to scale our analyses to \emph{large} input spaces, which is specifically where the combinatorial method fails to scale. 

For that reason, in the remainder of this report, we will aim at deriving a closed-form formula for the asymptotic behaviour of $|S|$ for large values of input size $m$. 

\subsection{Asymptotic Behaviour of the Input Vulnerability}

In order to do so, we will study the asymptotic behaviour, when $m$ tends towards infinity, of $|S_i^x|$ and $\sum_x |S_i^x|$ for all $i$ in $\lr{1}{4}$ in order to be able to compute that of $|S|$ and thus of $\V$. 
In this section, the asymptotic behaviour of the cardinality of a set, say $S$, will refer to the asymptotic behaviour of $|S|$ when expressed as a function of $m$, when $m$ tends towards infinity. 

Let us consider again the expression of $|S_1^x|$ obtained in Equation (\ref{auction:eq:s1}) and let us study its asymptotic behaviour when $m$ is large. We note that $x$ is an integer ranged in $\lr{1}{m}$ and is thus a $\cO(m)$. 
By simplifying the ceiling in the first factor, we have:
\begin{align*}
\ceil{\frac{1}{3}x -1} &= \frac{1}{3}x + \cO(1) 
\end{align*}

\ourparagraph{Case 1}
We now recall the entire expression of $|S_1^x|$ obtained in Equation (\ref{auction:eq:s1}) and proceed with a similar reasoning:
\begin{align*}
|S_1^x| &= \left( \ceil{\frac{1}{3}x - 1} \right) \left( 2\ceil{\frac{1}{2}x} - \ceil{\frac{1}{3}x} - 1 \right) \\
&= \left( \frac{1}{3}x + \cO(1)  \right) \left( x - \frac{1}{3}x + \cO(1) \right) \\
&= \frac{2}{3}x^2 + x \cdot \cO(1) \\
&= \frac{2}{3}x^2 + \cO(m)
\end{align*}

We thus obtain:
\begin{align*}
\sum_x |S_1^x| &= \sum_{x=1}^m \left( \frac{2}{3} x^2  + \cO(m) \right) \\
&= \frac{2}{3} \cdot  \frac{m(m+1)(2m+1)}{6} + \cO(m^2) \\
&= \frac{1}{3^2} m^3 + \cO(m^2)
\end{align*}

\ourparagraph{Case 2 and 3}
Let us now study the case of $|S_2^x|$. 
Based on Equations (\ref{auction:eq:s2a}), then if $2x \leq m$, we have:
\begin{align*}
|S_2^x| 
&= (x+1) \left( \ceil{\frac{2}{3}x} - 1 \right) \\
&= \frac{2}{3}x^2 + \cO(m)
\end{align*}

Similarly, if $2x > m$, then Equation (\ref{auction:eq:s2b}) becomes:
\begin{align*}
|S_2^x| &= (m - x + 1) \left( \ceil{\frac{2}{3}x} - 1 \right) \\
&= \frac{2}{3}mx - \frac{2}{3}x^2 + \cO(m)
\end{align*}

And thus:
\begin{align*}
\sum_{x=1}^m |S_2^x|
&= \sum_{x=1}^{\frac{m}{2}} \left( \frac{2}{3}x^2 + \cO(m) \right) + \sum_{x=\frac{m}{2}}^m \left( \frac{2}{3}mx - \frac{2}{3}x^2 + \cO(m) \right) + \cO(m^2)
\end{align*}
where we note again that the bounds of the indices in the sums are allowed to differ by 1 since the difference is compensated in the $\cO(n^2)$ term. 

Let us compute each term separately. Using the results recalled in Lemma \ref{auction:lemma:sums}, we have:
\begin{align*}
\sum_{x=1}^{\frac{m}{2}} \left( \frac{2}{3}x^2 + \cO(m) \right)
&= \frac{2}{3} \cdot \frac{1}{3} \left( \frac{m}{2} \right)^3 + \cO(m^2) \\
&= \frac{1}{2^2 \cdot 3^2} m^3 + \cO(m^2)
\end{align*}

Similarly, we note that $\sum_{x=\frac{m}{2}}^m \cO(m) = \cO(m^2)$. Moreover, we have:
\begin{align*}
\sum_{x=\frac{m}{2}}^m \left( \frac{2}{3}mx - \frac{2}{3}x^2 \right)
&= \frac{2}{3}m \sum_{x=\frac{m}{2}}^m x - \frac{2}{3} \sum_{x=\frac{m}{2}}^m x^2 + \cO(m^2) \\
&= \frac{2m}{2\cdot 3} \left(m^2 - \left( \frac{m}{2} \right)^2\right) - \frac{2}{3^2} \left( m^3 - \left( \frac{m}{2} \right)^3 \right) + \cO(m^2) \\
&= \frac{2m}{2\cdot 3} \cdot \frac{3m^2}{2^2} - \frac{2}{3^2} \cdot \frac{7m^3}{2^3} + \cO(m^2) \\
&= \frac{m^3}{2^2} - \frac{7m^3}{2^2 \cdot 3^2} + \cO(m^2) \\
&= \frac{1}{2\cdot 3^2} m^3 + \cO(m^2) 
\end{align*}

Combining the previous two terms, we get:
\begin{align*}
\sum_{x=1}^m |S_2^x| 
&= \frac{1}{2^2 \cdot 3^2} m^3 + \frac{1}{2\cdot 3^2} m^3 + \cO(m^2) \\
&= \frac{1}{2^2 \cdot 3} m^3 + \cO(m^2)
\end{align*}

By symmetry, we immediately get:
\begin{align*}
\sum_{x=1}^m ( |S_2^x| + |S_3^x| )
&= \frac{1}{2 \cdot 3} m^3 + \cO(m^2)
\end{align*}

\ourparagraph{Case 4}
Let us finally study the asymptotic behaviour of $\sum_x |S_4^x| = \sum_x (c_1^x + c_2^x + c_3^x)$. 
If $\frac{3}{2}x \leq m$, we know from Equation (\ref{auction:eq:c1a}) that:
\begin{align*}
c_1^x &= \left( \floor{\frac{3}{2}x} - x + 1 \right)^2 \\
&= \frac{x^2}{2^2} + \cO(m)
\end{align*}

Moreover, if $\frac{3}{2}x > m$, then from Equation (\ref{auction:eq:c1b}):
\begin{align*}
c_1^x &= (m - x + 1)^2 \\
&= m^2 - 2mx + x^2 + \cO(m)
\end{align*}

Thus:
\begin{align*}
\sum_{x=1}^m c_1^x 
&= \sum_{x=1}^{\frac{2}{3}m} \frac{x^2}{2^2} + \sum_{x=\frac{2}{3}m}^m (m^2 - 2mx + x^2) + \cO(m^2) \\
&= \frac{1}{2^2\cdot 3} \cdot \frac{2^3 m^3}{3^3} + \frac{m^3}{3} - \frac{2\cdot 5m^3}{2\cdot 3^2} + \frac{19m^3}{3^4} + \cO(m^2) \\
&= \frac{m^3}{3^3} + \cO(m^2)
\end{align*}

Similarly, by Equation (\ref{auction:eq:c2}), we have:
\[
c_2^x =
\begin{dcases*}
\frac{3}{4}x^2 + \cO(m)
	& if $3x\leq m$ \\
\frac{xm}{2} - \frac{3}{4}x^2 + \cO(m)
	& if $\frac{3}{2}x\leq m \leq 3x$ \\
0 & if $\frac{3}{2}x > m$
\end{dcases*}
\]
%

Thus:
\begin{align*}
\sum_{x=1}^m c_2^x
&=
\sum_{x=1}^{\frac{m}{3}} \frac{3}{4}x^2
+
\sum_{x=\frac{m}{3}}^{\frac{2m}{3}} \left( \frac{xm}{2} - \frac{3}{4}x^2 \right)
+ \cO(m^2) \\
&= \frac{3m^3}{4\cdot 3 \cdot 3^3} + \frac{(2^2-1)m^3}{2\cdot 2\cdot 3^2} - \frac{7\cdot 3m^3}{4\cdot 3 \cdot 3^3} + \cO(m^2) \\
&= \frac{m^3}{2^2 \cdot 3^3} + \cO(m^2)
\end{align*}

And by symmetry, we have:
\begin{align*}
\sum_{x=1}^m (c_2^x + c_3^x)
&=
\frac{m^3}{2 \cdot 3^3} + \cO(m^2)
\end{align*}

And thus:
\begin{align*}
\sum_{x=1}^m |S_4^x|
&=
\frac{m^3}{3^3} + \frac{m^3}{2 \cdot 3^3} + \cO(m^2) \\
&= \frac{m^3}{2 \cdot 3^2} + \cO(m^2)
\end{align*}

\ourparagraph{Conclusion}
And finally:
\begin{align}
|S| &= \sum_{i=1}^4 |S_i| \nonumber \\
&= \frac{1}{3^2} m^3 + \frac{1}{2 \cdot 3} m^3 + \frac{m^3}{2 \cdot 3^2} + \cO(m^2) \nonumber\\
&= \frac{1}{3}m^3 + \cO(m^2) \label{auction:eq:cardS}
\end{align}

Given this analysis, we can now formulate our main result for 3-party auctions. 

\begin{theorem}
\label{auction:thm:h3}
We have:
\[
\HH(X \mid O) = \log 3 + \cO \left(\frac{1}{m}\right)
\]
\end{theorem}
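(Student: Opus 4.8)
The plan is to combine the two facts already in hand: the identity $\V = |S|/m^3$ from Equation~(\ref{auction:eq:vuln3}), and the asymptotic estimate $|S| = \tfrac{1}{3}m^3 + \cO(m^2)$ from Equation~(\ref{auction:eq:cardS}). Substituting the second into the first immediately gives
\[
\V = \frac{1}{m^3}\left( \frac{1}{3}m^3 + \cO(m^2) \right) = \frac{1}{3} + \cO\!\left(\frac{1}{m}\right).
\]
So the entire combinatorial content of the theorem is already packaged in Equation~(\ref{auction:eq:cardS}); what remains is a short computation transporting this estimate through the definition $\HH(X \mid O) = -\log \V$.

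Next I would take the negative logarithm, first pulling out the leading constant. Writing $\V = \tfrac{1}{3}\bigl(1 + \cO(1/m)\bigr)$, we obtain
\[
\HH(X \mid O) = -\log \frac{1}{3} - \log\!\left(1 + \cO\!\left(\frac{1}{m}\right)\right) = \log 3 - \log\!\left(1 + \cO\!\left(\frac{1}{m}\right)\right).
\]
It then remains to control the second term. Since $\log(1+u) = u + \cO(u^2)$ as $u \to 0$, and the error term $\cO(1/m)$ tends to $0$, we have $\log\bigl(1 + \cO(1/m)\bigr) = \cO(1/m)$, which yields the claimed equality $\HH(X \mid O) = \log 3 + \cO(1/m)$.

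There is no real obstacle in this final step; all the difficulty lives in the earlier case analysis producing Equation~(\ref{auction:eq:cardS}). The only point deserving a line of justification is the legitimacy of absorbing the error term through the logarithm: this works precisely because $\V$ converges to the strictly positive constant $\tfrac{1}{3}$, so for $m$ large enough $\V$ lies in a fixed compact subinterval of $(0,\infty)$ on which $\log$ is smooth, and the first-order Taylor expansion applies with a uniformly bounded remainder. As a sanity check one may also read off $\lim_{m \to \infty} \HH(X \mid O) = \log 3$, the natural three-party counterpart of Corollary~\ref{auction:coro:2}.
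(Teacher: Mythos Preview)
Your proposal is correct and follows essentially the same route as the paper's own proof: both combine Equations~(\ref{auction:eq:vuln3}) and~(\ref{auction:eq:cardS}) to obtain $\V = \tfrac{1}{3} + \cO(1/m)$, then factor $\V = \tfrac{1}{3}(1 + \cO(1/m))$ and pass the error through the logarithm. Your added remark about $\V$ staying in a compact subinterval of $(0,\infty)$ makes the last step slightly more explicit than the paper's terse ``manipulating asymptotic developments,'' but the argument is the same.
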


\begin{proof}
Equation (\ref{auction:eq:vuln3}) and (\ref{auction:eq:cardS}) yield:
\[ \V = \frac{1}{3} + \cO\left(\frac{1}{m}\right) \]
and manipulating asymptotic developments yields:
\begin{align*}
\HH
&= -\log \left( \frac{1}{3} \left( 1 + \cO\left(\frac{1}{m}\right) \right)\right) \\
&= \log 3 -\log \left( 1 + \cO\left(\frac{1}{m}\right) \right) \\
&= \log 3 + \cO\left(\frac{1}{m}\right)
\end{align*}%
\end{proof}

It follows that $\HH(X \mid O)$ converges when $m$ tends toward infinity and its limit is stated in the following result. 

\begin{corollary}
\label{auction:coro:3}
When the input size $m$ tends towards infinity, the value of $\HH(X \mid O)$ converges and:
\[ \lim_{m \to \infty} \HH(X \mid O) = \log 3 \]
\end{corollary}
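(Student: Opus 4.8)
The plan is to obtain this corollary as an immediate consequence of Theorem \ref{auction:thm:h3}. That theorem already furnishes the asymptotic identity $\HH(X \mid O) = \log 3 + \cO(1/m)$ as $m \to \infty$. Unwinding the $\cO$ notation, there are a constant $C > 0$ and an index $m_0$ such that $|\HH(X \mid O) - \log 3| \le C/m$ for all $m \ge m_0$. Since $C/m \to 0$, a squeeze argument yields $\HH(X \mid O) \to \log 3$, which is exactly the claim. So the "proof" is really just the observation that the error term in Theorem \ref{auction:thm:h3} is not merely bounded but vanishing.

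If one prefers not to route through Theorem \ref{auction:thm:h3}, the same conclusion can be reached directly from the cardinality estimate. By Equation (\ref{auction:eq:vuln3}) we have $\V = |S|/m^3$, and Equation (\ref{auction:eq:cardS}) gives $|S| = \frac{1}{3} m^3 + \cO(m^2)$; dividing by $m^3$ shows $\V = \frac{1}{3} + \cO(1/m) \to \frac{1}{3}$. Because $t \mapsto -\log t$ is continuous at $t = \tfrac{1}{3}$ (and $\V$ stays bounded away from $0$ for large $m$), we may pass to the limit inside the logarithm to get $\HH(X \mid O) = -\log \V \to -\log \tfrac{1}{3} = \log 3$. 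This is essentially the computation performed inside the proof of Theorem \ref{auction:thm:h3}, now read as a statement about limits rather than about asymptotic expansions.

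I do not anticipate any genuine obstacle here. The only point worth a sentence of care is that $\HH(X \mid O)$ is a well-defined finite real number for every finite $m$ — which holds because the conditional vulnerability satisfies $\V \in (0,1]$, so $\HH = -\log \V$ is a finite nonnegative quantity — and that the $\cO(1/m)$ term of Theorem \ref{auction:thm:h3} controls precisely this quantity. Given that, the corollary follows in one line, and I would keep the written proof correspondingly short, simply citing Theorem \ref{auction:thm:h3}.
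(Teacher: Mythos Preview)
Your proposal is correct and matches the paper's approach exactly: the paper's proof is the single sentence ``This is an immediate consequence of Theorem \ref{auction:thm:h3},'' and your write-up simply unpacks that implication in more detail. The alternative route you sketch via $|S|/m^3 \to \tfrac{1}{3}$ and continuity of $-\log$ is also fine and is essentially the content of the proof of Theorem \ref{auction:thm:h3} itself.
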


\begin{proof}
This is an immediate consequence of Theorem \ref{auction:thm:h3}. 
\end{proof}

We note that a more precise approximation of $\V$ and $\HH(X \mid O)$ can be obtained by bounding the cardinality of the subsets involved in the computation of $|S|$ with upper and lower bounds. In particular, each flooring and ceiling can be approximated with a range of width 1, and so more involved computation would lead to a more precise result. 

The results obtained so far led us to conjecture on the asymptotic vulnerability of a targeted input in a general $n$-party auction.

\section{Conjecture for Multi-Party Auctions}
\label{auction:sec:multi}

Let $n$ be a positive integer and let us consider an $n$-party auction. 
Let $m$ be a positive integer that represents the input size. 
Let us consider $n$ inputs $x_1, \dots, x_n$ and we recall that the computation of the output of the auction is modelled by function $f$. 

In light of Corollaries \ref{auction:coro:2} and \ref{auction:coro:3}, we formulate the following conjecture. 

\begin{conjecture}
\label{auction:conj:limit}
Let $\HH(X \mid O)$ represent the conditional entropy of one of the inputs, $X$, given the output of the auction $O$. 
Then $\HH(X \mid O)$ converges when $m$ tends towards infinity and:
\[ \lim_{m \to \infty} \HH(X \mid O) = \log n \]
\end{conjecture}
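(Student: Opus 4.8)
The plan is to generalise the two moves that made the $n=2$ and $n=3$ cases work: rewrite the vulnerability as a lattice‑point count, then replace that count by a volume and evaluate the volume by symmetry. First I would note that Lemma~\ref{auction:lemma:replace} gives $\max_x p(o\mid x) = p(o\mid X=o)$ for every output $o$, exactly as in Equations~(\ref{eq:vuln_two}) and~(\ref{auction:eq:vuln3}), so that
\[
\V = \frac{1}{m^n}\,|S|,
\qquad
S = \bigl\{ (x_1,\dots,x_n)\in\lr{1}{m}^{n} \bigm| f(x_1,\dots,x_n)=x_1 \bigr\}.
\]
Hence it suffices to prove $|S| = \tfrac{1}{n}\,m^{n} + o(m^{n})$; this yields $\V\to\tfrac{1}{n}$ and $\HH(X\mid O)=-\log\V\to\log n$.

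Next I would pass to a continuous formulation. Since Algorithm~\ref{auction:algo:f} sorts its inputs and then returns one of them, $f$ is symmetric in its arguments and positively homogeneous, $f(\lambda x)=\lambda f(x)$ for $\lambda>0$; therefore $f(x)=x_1$ iff $f(x/m)=x_1/m$, so $\tfrac{1}{m}S = \{\tfrac{1}{m},\dots,1\}^{n}\cap\tilde S$ where $\tilde S := \{ t\in(0,1]^{n} \mid f(t)=t_1 \}$ and $f$ now denotes the real‑valued extension of the algorithm. Conditioning on the rank $r$ of $t_1$ among $t_1,\dots,t_n$ and on the index $k$ attaining $\max_j j\,t_{(j)}$, each choice carves out a polytope (the orderings together with the linear inequalities $r\,t_{(r)}\ge j\,t_{(j)}$ and the tie‑break convention), so $\tilde S$ is a finite union of polytopes with rational vertices; in particular it is Jordan measurable and $\partial\tilde S$ lies in finitely many hyperplanes. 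The standard lattice‑point estimate then gives $|S| = \#\bigl(\{1,\dots,m\}^{n}\cap m\tilde S\bigr) = m^{n}\operatorname{vol}(\tilde S) + o(m^{n})$, and the discrete tie‑break (``largest possible $k$'') disagrees with $\tilde S$ only on a set of size $O(m^{n-1})$, hence is irrelevant to the leading term.

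It then remains to compute $\operatorname{vol}(\tilde S)$, which is where symmetry does all the work. Let $U_1,\dots,U_n$ be i.i.d.\ uniform on $[0,1]$, so $\operatorname{vol}(\tilde S)=\Pr[f(U)=U_1]$. Almost surely the $U_i$ are pairwise distinct, and since the output of $f$ is always one of its arguments, almost surely exactly one index $i$ satisfies $f(U)=U_i$; hence $\sum_{i=1}^{n}\Pr[f(U)=U_i]=1$. Because $f$ is symmetric and $(U_1,\dots,U_n)$ is exchangeable, the probabilities $\Pr[f(U)=U_i]$ are all equal, so each equals $\tfrac{1}{n}$; in particular $\operatorname{vol}(\tilde S)=\tfrac{1}{n}$. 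Putting the three steps together gives $\V=\tfrac{1}{n}+o(1)$ and therefore $\HH(X\mid O)\to\log n$, consistent with Corollaries~\ref{auction:coro:2} and~\ref{auction:coro:3}.

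I expect the middle step to be the main obstacle: one has to write the finite polyhedral description of $\tilde S$ carefully and check that the artificial upper bound (the faces $t_i=1$), the homogeneity rescaling, and the tie‑break rule interact so that $\partial\tilde S$ is Lebesgue‑null, which is precisely what licenses the Riemann‑sum passage $|S|/m^{n}\to\operatorname{vol}(\tilde S)$; once that is in hand, the volume computation is the one‑line symmetry argument above. A fully elementary alternative, which avoids measure theory and would additionally recover the sharper $\cO(1/m)$ rate of Theorem~\ref{auction:thm:h3}, is to imitate the three‑party computation directly: partition $S=\bigcup_{x}\bigcup_{r}S_r^{x}$ by the value $x$ of the targeted input and its rank $r$, sandwich each $|S_r^{x}|$ between two products of linear expressions (replacing every floor or ceiling by an interval of length $1$), sum over $x$ using Lemma~\ref{auction:lemma:sums}, and verify that the leading terms add up to $\tfrac{1}{n}\,m^{n}$ — but this becomes combinatorially heavy for general $n$, which is presumably why the statement is left as a conjecture.
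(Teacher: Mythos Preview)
The paper does not prove this statement: it is stated as a conjecture, supported only by the cases $n\le 3$ (Corollaries~\ref{auction:coro:2} and~\ref{auction:coro:3}) and by polynomial regression on empirically computed values of $c_n(m)$ for $n\le 5$. Your proposal therefore goes strictly beyond the paper, and as far as I can see the argument is correct.

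Step~1 (reducing $\V$ to $|S|/m^{n}$ via Lemma~\ref{auction:lemma:replace}) is exactly what the paper does for $n=2,3$ and generalises verbatim. Step~3 is the decisive new idea, and it is sound: $f$ is symmetric in its arguments and its output always equals one of them, so for i.i.d.\ continuous $U_1,\dots,U_n$ the events $\{f(U)=U_i\}$ are equiprobable and almost surely partition the sample space, giving $\operatorname{vol}(\tilde S)=\Pr[f(U)=U_1]=1/n$. This exchangeability observation is entirely absent from the paper, whose method for $n=2,3$ is an explicit polyhedral decomposition of $S$ followed by hand summation --- precisely the approach that becomes intractable for general $n$ and led the authors to leave the statement as a conjecture. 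Step~2 is, as you anticipate, the only place needing care, but since the defining constraints $t_{\sigma(1)}\ge\cdots\ge t_{\sigma(n)}$ and $k\,t_{(k)}\ge j\,t_{(j)}$ are linear, $\tilde S$ is a finite union of polytopes, its boundary has Lebesgue measure zero, and the Riemann-sum passage $|S|/m^{n}\to\operatorname{vol}(\tilde S)$ is standard; the tie-break rule affects only $\cO(m^{n-1})$ lattice points, as you note.

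In short, your route is different from and stronger than the paper's. The paper's explicit case analysis yields the sharper $\cO(1/m)$ error for $n=3$ at the cost of heavy computation; your volume argument gives only $o(1)$ but works uniformly in $n$ with essentially no computation. If you also want the $\cO(1/m)$ rate, observing that $\tilde S$ is a finite union of \emph{rational} polytopes and invoking the Ehrhart-type bound $\#\bigl(m\tilde S\cap\mathbb{Z}^{n}\bigr)=m^{n}\operatorname{vol}(\tilde S)+\cO(m^{n-1})$ recovers it without any case splitting.
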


We note that this conjecture is also verified when $n = 1$ since in the presence of a single party, we have $\HH(X \mid O) = 0$. 
The results derived in Corollaries \ref{auction:coro:2} and \ref{auction:coro:3} are also consistent with this conjecture. 
In this section, we provide more evidence which supports our intuition.

Let us first generalise a notion introduced in Equation (\ref{auction:eq:vuln3}) that will enable us to express the vulnerability more conveniently. 

\begin{definition}
We define function $c_n\colon \mathbb{N}^* \to \mathbb{N}$ for all positive integer $m$ as:
\[ c_n(m) = |\{ (x_1, \dots, x_n) \in \lr{1}{m}^n \mid f(x_1, \dots, x_n) = x_1 \}| \]
\end{definition}

As discussed in Section \ref{auction:sec:three}, we can easily prove that $\V = \frac{c_n(m)}{m^n}$. 
The difficulty resides in computing $c_n(m)$ for large values of $m$. 

Let us first formulate a second conjecture on the shape of function $c_n$. This will help us to reason about its asymptotic behaviour more precisely. 

\begin{conjecture}
\label{auction:conj:shape}
There exists a positive rational number $a_n$ in $\mathbb{Q}$ such that:
\[ c_n(m) = a_n \cdot m^n + \cO(m^{n-1}) \]
\end{conjecture}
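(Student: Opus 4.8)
The plan is to follow the same geometric/combinatorial recipe that worked for $n=2$ and $n=3$, but argue generically about the structure rather than carrying out the case analysis explicitly. First I would recall from Lemma \ref{auction:lemma:replace} that $\V = c_n(m)/m^n$, so it suffices to show $c_n(m) = a_n m^n + \cO(m^{n-1})$ for some positive rational $a_n$. The key observation is that $c_n(m)$ counts integer points $(x_1,\dots,x_n) \in \lr{1}{m}^n$ satisfying $f(x_1,\dots,x_n) = x_1$. After sorting, the condition defining the output of $f$ (namely that $x_1$, once placed in the sorted sequence at some rank $j$, realises the largest value of $j \cdot x_{(j)}$) can be decomposed according to (i) the relative order type of $x_1$ among the other inputs, i.e.\ which of the $x_i$ are $\geq x_1$ and which are $<x_1$, and (ii) for each such order type, a finite conjunction/disjunction of linear inequalities of the form $a x_1 \bowtie b x_i$ with $a,b$ positive integers and $\bowtie \in \{<, \leq, >, \geq\}$, together with the box constraints $1 \leq x_i \leq m$. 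Each resulting region is a rational polytope whose facets are hyperplanes with rational coefficients independent of $m$, scaled by $m$.

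The main step is then an appeal to Ehrhart-type reasoning: for a fixed rational polytope $P \subseteq [0,1]^n$ (obtained by dividing all coordinates by $m$), the number of integer points of $\lr{1}{m}^n$ lying in $mP$ is $\mathrm{vol}(P)\, m^n + \cO(m^{n-1})$. Summing over the finitely many order types and finitely many disjoint sub-regions, we get $c_n(m) = \big(\sum_P \mathrm{vol}(P)\big) m^n + \cO(m^{n-1})$, and setting $a_n = \sum_P \mathrm{vol}(P)$ gives a rational (indeed, a sum of volumes of rational polytopes, hence rational) leading coefficient. Positivity of $a_n$ follows because at least one region has positive volume — for instance the region where $x_1$ is strictly the largest input and $x_1 > n x_i$ for all $i \geq 2$ forces $f = x_1$ and has volume $1/n^{n-1} \cdot$ (something positive) inside the box, or more simply because $c_n(m) \geq$ (number of tuples where all $x_i$ equal and equal to $x_1$-ish), which is already $\Omega(m)$ but one needs the cubic-order term, so I would instead exhibit an explicit full-dimensional subregion. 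Alternatively, positivity is immediate from $c_n(m) \geq c_{n}(m)$ restricted to the ``$x_1$ strictly dominates'' cell, which is a full-dimensional polytope.

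The delicate point — and where I expect the real obstacle — is justifying that the defining condition $f(x_1,\dots,x_n) = x_1$ really does decompose into finitely many rational polytopes uniformly in $m$. The function $f$ involves an $\argmax$ over the sorted sequence, so membership in the counted set is: ``there exists a rank $j$ and an assignment of the other inputs to ranks consistent with sorting such that $j \cdot x_1 \geq j' \cdot x_{(j')}$ for all $j'$.'' Unwinding this into an explicit Boolean combination of linear inequalities (as was done by hand for $n=2,3$) is routine in principle but the bookkeeping grows combinatorially; the paper only conjectures this, so presumably a clean general argument is not obvious. One safe route is to observe that the comparison of two products $j x_{(j)}$ vs $j' x_{(j')}$ is, once the sorted order is fixed, a single linear inequality, and the sorted order itself is a conjunction of linear inequalities; hence the counted set is a finite union of rational polytopes by quantifier elimination over the finitely many sort-orders and the finitely many $\argmax$-witnesses. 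Granting that structural fact, the Ehrhart estimate and the rationality/positivity of $a_n$ follow immediately, but making the structural fact fully rigorous for arbitrary $n$ is exactly why this is stated as a conjecture rather than a theorem.
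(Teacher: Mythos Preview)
The paper does not prove this statement: it is explicitly a conjecture, and the only justification offered is the one-line heuristic remark that ``from the way that $c_n(m)$ is constructed [in Section~\ref{auction:sec:three}] it will comply with the shape aforementioned,'' after which the paper turns to empirical polynomial interpolation. Your proposal therefore goes well beyond anything the paper supplies.

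Your Ehrhart-type sketch is in fact a sound route to an actual proof, and the point you flag as delicate is less of an obstacle than you suggest. Fixing a permutation $\sigma$ giving the sorted order and a rank $k$ as the $\argmax$ witness, the condition $f(x_1,\dots,x_n)=x_1$ becomes (away from the measure-zero set where two coordinates coincide, which contributes only $\cO(m^{n-1})$ lattice points): $\sigma(k)=1$, the order chain $x_{\sigma(1)}\geq\cdots\geq x_{\sigma(n)}$, and the comparisons $k\,x_{\sigma(k)}\geq j\,x_{\sigma(j)}$ with the appropriate strict/non-strict choices enforcing the tie-break. These are finitely many linear constraints with integer coefficients independent of $m$, ranging over finitely many pairs $(\sigma,k)$, so the counted set is exactly the lattice points in a finite union of dilates $mP$ of fixed rational polytopes $P\subseteq[0,1]^n$. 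Standard Ehrhart quasi-polynomial theory then yields $c_n(m)=\big(\sum_P\mathrm{vol}(P)\big)\,m^n+\cO(m^{n-1})$, with the leading coefficient rational (rational polytopes have rational volumes) and strictly positive (for instance, the region $x_1\leq x_i\leq\tfrac{n}{n-1}x_1$ for all $i\geq 2$ forces $f=x_1$ and is full-dimensional). Your closing caveat that the structural decomposition ``is exactly why this is stated as a conjecture'' is over-cautious: the authors simply did not pursue this line, not because it is out of reach.
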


We can see in Section \ref{auction:sec:three} from the way that $c_n(m)$ is constructed that it will comply with the shape aforementioned. 

Conjecture \ref{auction:conj:limit} is now equivalent to the fact that $a_n = \frac{1}{n}$, for which we will show some empirical supportive evidence. 
We emphasise the fact that the following reasoning is empirical and does not constitute a proof. 

As Conjecture \ref{auction:conj:shape} suggests that function $c_n$ behaves like a polynomial of degree $n$ for large values of $m$, we decided to try and interpolate function $c$ with a polynomial of degree at most $n$ via its first 30 values $c_n(1), \dots, c_n(30)$ which we computed empirically. 
Let us plot the first 30 values of $c_n$ that we computed for $n=2, \dots, 5$. 
Based on those 20 values, we performed a polynomial regression with a polynomial of degree $n$ respectively, and with a least squares method. 
The respective interpolating polynomials $P_2, \dots, P_5$ that we empirically obtain are displayed below:
\begin{align*}
P_2 &=  0.5 x^2 + 0.5 x \\
P_3 &= 0.3334 x^3 + 0.6649 x^2 + 0.3577 x - 0.1987 \\
P_4 &= 0.2499 x^4 + 0.7671 x^3 + 0.3552 x^2 + 1.267 x - 2.231 \\
P_5 &= 0.1995 x^5 + 0.8503 x^4 - 0.6201 x^3 + 14.77 x^2 - 62.27 x + 64.15
\end{align*}
where coefficients have been reported with 4 significant figures. 

By looking at the coefficients of highest degree in the above polynomials, we can see that the empirical values thus obtained for $a_n$ are indeed very close to $\frac{1}{n}$, which supports our Conjecture \ref{auction:conj:limit}.

\begin{figure}
\centering

\subfigure[Case $n=2$. ]{
\begin{tikzpicture}[scale=1]
	\begin{axis}[
	 ymin=0,
	 xmin=0,
	  xlabel=$m$,
	  ylabel=prevalence,
	  width=8cm,
	  height=9cm,
		scaled y ticks=base 10:-2,
	  legend pos=north west]
	\addplot[mark size=1pt, only marks, blue] table [y=y, x=x]{figuresAuction/2.dat};
	\addplot[red] table [y=y, x=x, mark=none]{figuresAuction/2b.dat};
	\legend{$c_n(m)$, $P_n(m)$}
\end{axis}
\end{tikzpicture}
\label{auction:fig:n2}
%
%

}\hfill
\subfigure[Case $n=3$. ]{
\begin{tikzpicture}[scale=1]
	\begin{axis}[
	 ymin=0,
	 xmin=0,
	  width=8cm,
	  height=9cm,
	  xlabel=$m$,
	  ylabel=prevalence,
	  scaled y ticks=base 10:-3,
	  legend pos=north west]
	\addplot[mark size=1pt, only marks, blue] table [y=y, x=x]{figuresAuction/3.dat};
	\addplot[red] table [y=y, x=x, mark=none]{figuresAuction/3b.dat};
	\legend{$c_n(m)$, $P_n(m)$}
\end{axis}
\end{tikzpicture}
\label{auction:fig:n3}
%
%
}
\par\bigskip
\par\bigskip
\subfigure[Case $n=4$. ]{
\begin{tikzpicture}[scale=1]
	\begin{axis}[
	 ymin=0,
	 xmin=0,
	  width=8cm,
	  height=9cm,
	  xlabel=$m$,
	  ylabel=prevalence,
	  scaled y ticks=base 10:-4,
	  legend pos=north west]
	\addplot[mark size=1pt, only marks, blue] table [y=y, x=x]{figuresAuction/4.dat};
	\addplot[red] table [y=y, x=x, mark=none]{figuresAuction/4b.dat};
	\legend{$c_n(m)$, $P_n(m)$}
\end{axis}
\end{tikzpicture}
\label{auction:fig:n4}
%
%
}\hfill
\subfigure[Case $n=5$.]{
\begin{tikzpicture}[scale=1]
	\begin{axis}[
	 ymin=0,
	 xmin=0,
	  width=8cm,
	  height=9cm,
	  xlabel=$m$,
	  ylabel=prevalence,
	  legend pos=north west]
	\addplot[mark size=1pt, only marks, blue] table [y=y, x=x]{figuresAuction/5.dat};
	\addplot[red] table [y=y, x=x, mark=none]{figuresAuction/5b.dat};
	\legend{$c_n(m)$, $P_n(m)$}
\end{axis}
\end{tikzpicture}
\label{auction:fig:n5}
}
\caption{Polynomial interpolation of empirical values of $c_n$ with theoretical polynomials $P_n$ of degree at most $n$. The first 30 values of $c_n$ were computed empirically, those values were used to perform a polynomial regression with a least squares method. }
\end{figure}
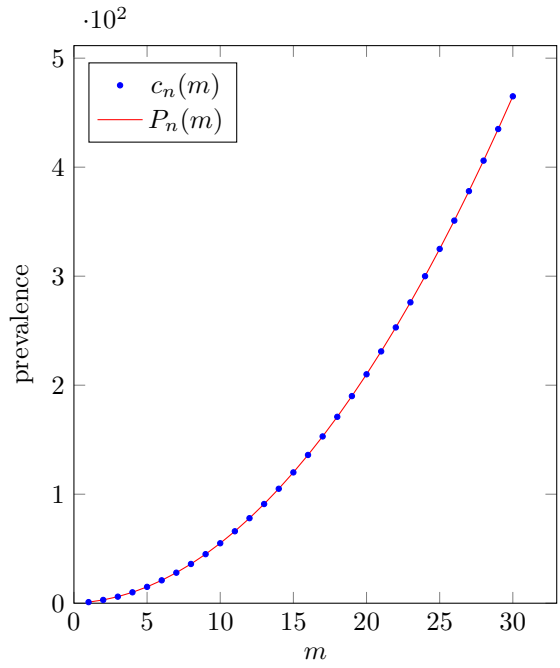
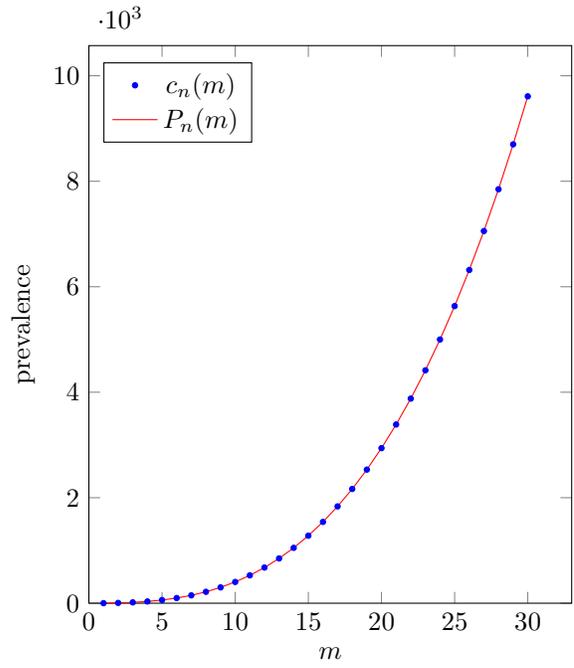
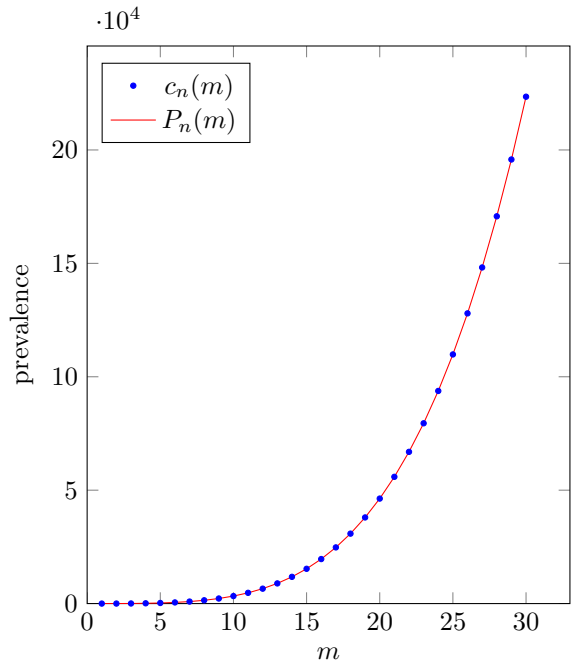
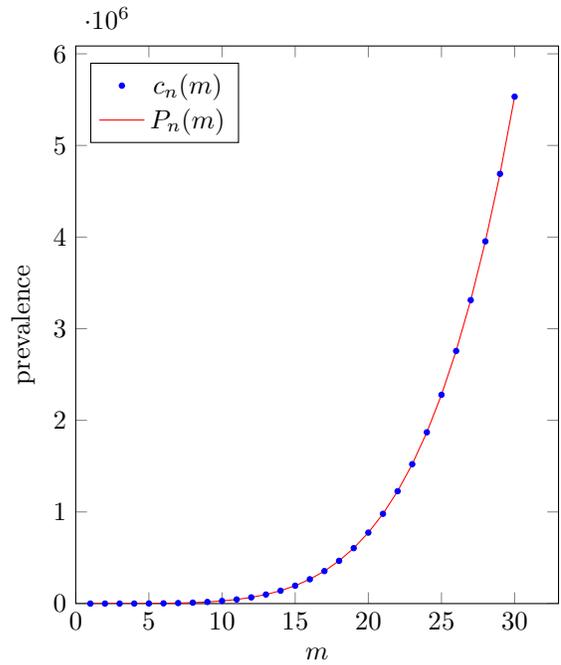

\section{Discussion}
\label{sec:discu}

The results obtained and conjectured in this paper suggest that the computational power that an attacker may have should somehow play a role in our approach. 
Similarly to the differences that have been formally defined between information-theoretic security and computational security in a cryptographic protocol, we may also identify similar nuances in privacy. 

For example, let us assume that our Conjecture \ref{auction:conj:limit} is correct and that the vulnerability of one input in an $n$-party auction converges towards $\frac{1}{n}$ when the input size $m$ tends to infinity.
This would be a poor privacy guarantee, since the prior vulnerability of one input is $\frac{1}{m}$, and $m$ is much larger than $n$.

However, this $\frac{1}{n}$ limit would be a \emph{theoretical} value: this means that in theory, an attacker learning the output of an auction has on average a $\frac{1}{n}$ probability of guessing one input in one try if he selects the best guess.
But in reality, an attacker with limited computational power might not be able to select the best guess that would offer him a $\frac{1}{n}$ probability of guessing the secret.

Our empirical method for computing $c_n(m)$~--~that does not scale to large input spaces~--~does provide, along with the exact value of $c_n(m)$, the best guessing strategy that achieves the expected vulnerability as it explicitly chooses the best guess.

However, in the 3-party auction, we solely proved that for a large value of $m$, the input's vulnerability would be close to $\frac{1}{3}$.
In particular, we were unable to provide a way for an attacker to select the best guessing strategy given the auction's output.

One might thus be interested in considering the potential difference that may exist between the theoretical vulnerability and the computational vulnerability of a secret.

\section{Conclusion}
\label{sec:conclu}

Digital goods auctions are one real world use case that can benefit from the security guarantees that Secure Multi-Party Computation has to offer. 
One of the main advantages of using SMC is to protect the confidentiality of the participants' bids. 
As in every application of SMC, private inputs in auctions are subject to acceptable leakage. 
Although general, combinatorial privacy analyses are able to quantify this leakage for small input spaces, they fail to scale to large input spaces. 
In this paper, we derived methods for quantifying the acceptable leakage that scales to arbitrarily large input spaces in the particular case of digital goods auctions. 
We first derived a closed-form formula for the posterior entropy of a targeted input in two-party auctions. 
We then focused on studying the asymptotic behaviour of this posterior entropy in three-party auctions. 
This enabled us to formulate a conjecture on the asymptotic behaviour of this acceptable leakage in general $n$-party auctions with large input spaces, which we further supported with empirical observations. 

\bibliographystyle{plain} 
\bibliography{./references}

\end{document}